\newtheorem{theorem}{Theorem}
\newtheorem{lemma}{Lemma}
\newcommand{\dps}{\displaystyle}
\newcommand{\ii}{\infty}
\newcommand\R{{\ensuremath {\mathbb R} }}
\newcommand\1{{\ensuremath {\mathds 1} }}
\newcommand\nn{\nonumber}
\renewcommand\phi{\varphi}
\newcommand{\cE}{\mathcal{E}}
\newcommand{\cL}{\mathcal{L}}
\newcommand{\bx}{\mathbf{x}}
\newcommand{\by}{\mathbf{y}}
\newcommand{\bz}{\mathbf{z}}
\newcommand{\bu}{\mathbf{u}}
\newcommand{\br}{\mathbf{r}}
\newcommand{\bk}{\mathbf{k}}
\renewcommand{\epsilon}{\varepsilon}
\renewcommand{\geq}{\geqslant}
\renewcommand{\leq}{\leqslant}
\renewcommand{\tilde}{\widetilde}
\newcommand{\dx}{{\rm d}\bx}
\newcommand{\dy}{{\rm d}\by}
\newcommand{\dz}{{\rm d}\bz}
\newcommand{\du}{{\rm d}\bu}
\newcommand{\dk}{{\rm d}\bk}
\newcommand{\dt}{{\rm d}t}
\newcommand{\ds}{{\rm d}s}
\date{\today}
\begin{document}

\title[Lieb-Oxford inequality with gradient correction]{Improved Lieb-Oxford exchange-correlation inequality with gradient correction}

\author{Mathieu Lewin}
%\email{Mathieu.Lewin@math.cnrs.fr}
\affiliation{CNRS and CEREMADE (UMR CNRS 7534), University of Paris-Dauphine, Place de Lattre de Tassigny, 75775 Paris Cedex 16, France}

\author{Elliott H. Lieb}
\affiliation{Departments of Physics and Mathematics, Princeton University, Jadwin Hall, Washington Road, Princeton, NJ 08544, USA}
%\email{lieb@princeton.edu}

\begin{abstract}
We prove a Lieb-Oxford-type inequality on the indirect part of the Coulomb energy of a general many-particle quantum state, with a lower constant than the original statement but involving an additional gradient correction. The result is similar to a recent inequality of Benguria, Bley and Loss~\cite{BenBleLos-12}, except that the correction term is purely local, which is more usual in density functional theory. In an appendix, we discuss the connection between the indirect energy and the classical Jellium energy for constant densities. We show that they differ by an explicit shift due to the long range of the Coulomb potential. 
\end{abstract}

\maketitle

\section{Introduction and main result}

One of the central problems in density functional theory is the estimation of the indirect part $E_{\rm Ind}$ of the Coulomb energy in the ground state. Absent a precise calculation, the next best thing is to have a rigorous bound for $E_{\rm Ind}$ in terms of the density $\rho(\bx)$. Ideally, this estimate should be local, that is, it must be given by an integral of some function of $\rho(\bx)$ and its derivatives, and, possibly, the kinetic energy density~\cite{BacDel-14}.

The indirect part of the Coulomb energy of an $N$-particle quantum state with two-particle spatial density matrix $\Gamma^{(2)}(\bx,\by;\bx',\by')$ (after summation over spins) is defined by
\begin{equation}
E_{\rm Ind}=\int_{\R^3}\int_{\R^3}\frac{\Gamma^{(2)}(\bx,\by;\bx,\by)}{|\bx-\by|}\,\dx\,\dy-D(\rho,\rho)
\end{equation}
where $\dx={\rm d}x\,{\rm d}y\,{\rm d}z$,
\begin{equation}
\rho(\bx)=\frac{2}{N-1}\int_{\R^3}\Gamma^{(2)}(\bx,\by;\bx,\by)\,\dy
\label{eq:def_rho}
\end{equation}
is the corresponding density, with $\int_{\R^3}\rho(\bx)\,\dx=N$,  and
\begin{equation}
D(\rho,\rho)=\frac12 \int_{\R^3}\int_{\R^3}\frac{\rho(\bx)\rho(\by)}{|\bx-\by|}\,\dx\,\dy
\label{eq:def_Coulomb}
\end{equation}
is its (direct) Coulomb energy. The indirect energy $E_{\rm Ind}$ includes all ``exchange'' and ``correlation'' effects~\footnote{In Kohn-Sham theory, the \emph{exchange-correlation energy} $E_{\rm xc}$ by definition also contains a kinetic part. Here $E_{\rm Ind}$ only contains the electrostatic part and it is sometimes denoted as $E_{\rm xc}(\lambda=1)$ in the literature~\cite{LanPer-77}.}. It is considered here for a general $N$-particle state, not necessarily the ground state.
A well known bound on $E_{\rm Ind}$ is the Lieb-Oxford inequality~\cite{LieOxf-80}
\begin{equation}
E_{\rm Ind}\geq -1.68\int_{\R^3}\rho(\bx)^{4/3}\,\dx
\label{eq:usual_LO}
\end{equation}
which was based on an earlier inequality in~\cite{Lieb-79} with the constant 8.52 instead of 1.68, and which was later improved by Chan and Handy to 1.64 in~\cite{ChaHan-99}.

In the original derivation of~\eqref{eq:usual_LO} in~\cite{Lieb-79,LieOxf-80} the bound naturally breaks up into two pieces. The first is
$$-\frac35 \left(\frac{9\pi}{2}\right)^{1/3}\int_{\R^3}\rho(\bx)^{4/3}\,\dx$$
where $(3/5) ({9\pi}/{2})^{1/3}\simeq 1.4508$, and it arises in a straightforward manner, as we will recall below. The second is non local and less transparent, and it requires technical manipulations to estimate it by the simpler local term $\int_{\R^3}\rho(\bx)^{4/3}\,\dx$~\cite{Lieb-79,LieOxf-80,LieSei-09}. A considerable step forward was recently taken by Benguria, Bley and Loss in~\cite{BenBleLos-12} who realized that the second term vanishes if $\rho$ is constant, and that it could be bounded in terms of the quantity $(\sqrt{\rho},|\nabla|\sqrt{\rho})$, defined as
\begin{align}
(\sqrt{\rho},|\nabla|\sqrt{\rho})&=\int_{\R^3}|\bk|\,|\widehat{\sqrt\rho}(\bk)|^2\,\dk\nn\\
&=\frac1{2\pi^2}\int_{\R^3}\int_{\R^3}\frac{\big|\sqrt{\rho(\bx)}-\sqrt{\rho(\by)}\big|^2}{|\bx-\by|^4}\,\dx\,\dy\label{eq:nonlocal}
\end{align}
(here, $\widehat{\sqrt{\rho}}(\bk)$ is the Fourier transform of $\sqrt{\rho(\bx)}$). This quantity has the desired property of being small for slowly varying densities but it is very non local and not easy to compute from the knowledge of $\rho(\bx)$. 
In a numerical code based on Gaussian-type atomic orbitals, computing this term would typically require the use of the Fast Fourier Transform on a very fine grid to compensate for the large variations of $\sqrt{\rho(\bx)}$ in the neighborhood of the atoms, and that could dramatically increase the computational cost.

In this paper we give two alternatives to the bound of Benguria, Bley and Loss, involving only integrals of derivatives of $\rho$. For simplicity, we state the two bounds in one line.\footnote{Here and elsewhere in the paper $\nabla\rho^{\frac13}=\nabla(\rho^{\frac13})$ and 
$|\nabla f(\bx)|=\sqrt{\left|\tfrac{\partial f}{\partial x}(\bx)\right|^2+\left|\tfrac{\partial f}{\partial y}(\bx)\right|^2+\left|\tfrac{\partial f}{\partial z}(\bx)\right|^2}$.}

\begin{theorem}[Lieb-Oxford inequality with gradient correction]
We have the following two bounds
\begin{multline}
E_{\rm Ind} \geq -\left(\frac35 \left(\frac{9\pi}{2}\right)^{\tfrac13}+\alpha\right)\int_{\R^3}\rho(\bx)^{\tfrac43}\,\dx\\
-\begin{cases}
\dps \frac{0.001206}{\alpha^3}\int_{\R^3}|\nabla\rho(\bx)|\,\dx\\
\dps\frac{0.2097}{\alpha^2}\int_{\R^3}|\nabla\rho^{\frac13}(\bx)|^2\,\dx
\end{cases}
\label{eq:LO_gradient}
\end{multline}
for all $\alpha>0$.
After optimizing over $\alpha$ in~\eqref{eq:LO_gradient}, an equivalent formulation is
\begin{multline}
E_{\rm Ind} \geq -\frac35 \left(\frac{9\pi}{2}\right)^{\tfrac13}\int_{\R^3}\rho(\bx)^{\tfrac43}\,\dx\\
-\begin{cases}
\dps0.3270\left(\int_{\R^3}|\nabla\rho(\bx)|\,\dx\right)^{\tfrac14}\left(\int_{\R^3} \rho(\bx)^{\tfrac43}\,\dx\right)^{\tfrac34}\\
\dps1.1227\left(\int_{\R^3}|\nabla\rho^{\frac13}(\bx)|^2\,\dx\right)^{\tfrac13}\left(\int_{\R^3} \rho(\bx)^{\tfrac43}\,\dx\right)^{\tfrac23}\!\!.
\end{cases}
\label{eq:LO_gradient2}
\end{multline}
\end{theorem}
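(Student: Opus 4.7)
My plan is to follow the Onsager/Lieb--Oxford electrostatic paradigm with a free length scale that will encode the parameter~$\alpha$. I attach to each particle at position $x_i$ a smeared unit positive charge $\mu_{x_i}$ uniformly distributed on the ball $B(x_i, R(x_i))$ of radius $R(x_i) = c\,\rho(x_i)^{-1/3}$, where $c > 0$ is to be calibrated against $\alpha$. Positivity of the Coulomb form applied to the neutral combination $\sum_i \mu_{x_i} - \rho$, together with Newton's theorem in the form $2\,D(\mu_{x_i},\mu_{x_j}) \le |x_i - x_j|^{-1}$ for $i\neq j$, yields the pointwise lower bound
\begin{equation*}
\sum_{i<j}\frac{1}{|x_i - x_j|} \;\ge\; 2\sum_i D(\rho,\mu_{x_i}) - D(\rho,\rho) - \sum_i D(\mu_{x_i},\mu_{x_i}).
\end{equation*}
Averaging against $\Gamma^{(2)}$ and re-expressing $2\,D(\rho,\mu_x)$ as the integral of $\rho$ against the smoothed Coulomb potential $\Phi_{R(x)}$ of the uniform ball, one arrives at
\begin{equation*}
E_{\rm Ind} \;\ge\; -\int \rho(x)\,D(\mu_x,\mu_x)\,\dx \;-\; \int\!\!\int \rho(x)\rho(u)\,W_{R(x)}(x-u)\,\du\,\dx,
\end{equation*}
where $W_R := |\cdot|^{-1} - \Phi_R \ge 0$ is the exchange-hole kernel, supported in $B(0,R)$.

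Next I decompose $\rho(u) = \rho(x) + [\rho(u) - \rho(x)]$ inside the cross term. The elementary identities $D(\mu_x,\mu_x) = 3/(5 R(x))$ and $\int W_{R(x)}(v)\,{\rm d}v = 2\pi R(x)^2/5$ for the uniform ball show that the frozen-density piece combined with the self-energy amounts to $\bigl(\tfrac{3}{5c} + \tfrac{2\pi c^2}{5}\bigr)\int\rho^{4/3}\,\dx$. This bracket is minimised at $c^\ast = (3/(4\pi))^{1/3}$ with minimum $\tfrac35(\tfrac{9\pi}{2})^{1/3}$; for any $\alpha>0$ I pick $c=c(\alpha)$ on the small-$c$ branch so that the bracket equals $\tfrac35(\tfrac{9\pi}{2})^{1/3} + \alpha$, and note that $c(\alpha) \sim 3/(5\alpha)$ as $\alpha\to\infty$.

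It remains to bound the remainder $\mathrm{Rem} := \int\!\!\int \rho(x)[\rho(u) - \rho(x)]\,W_{R(x)}(x-u)\,\du\,\dx$. I write $\rho(u)-\rho(x) = \int_0^1 (u-x)\cdot\nabla\rho(x+t(u-x))\,{\rm d}t$ and perform the substitutions $v = u-x$ and then $y = x+tv$. The pointwise invariant $\rho(x)\,R(x)^3 \equiv c^3$ allows the $\rho(x)$-prefactor to be absorbed against the volume of the smearing ball; combined with the elementary bound $|v|\,W_R(v)\le 1$ on $B(0,R)$ and Fubini, this gives $|\mathrm{Rem}|\le K_1(c)\int|\nabla\rho|\,\dx$ with $K_1(c) = O(c^3)$. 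For the second alternative I rewrite $\nabla\rho = 3\rho^{2/3}\nabla\rho^{1/3}$ and apply Cauchy--Schwarz in the spatial integral --- again exploiting $\rho R^3 \equiv c^3$ to control the surviving $\rho^{2/3}$ weights --- to get $|\mathrm{Rem}|\le K_2(c)\int|\nabla\rho^{1/3}|^2\,\dx$ with $K_2(c) = O(c^2)$. Inserting the asymptotic $c(\alpha) \sim 1/\alpha$ reproduces the $\alpha^{-3}$ and $\alpha^{-2}$ factors in~\eqref{eq:LO_gradient}, and \eqref{eq:LO_gradient2} then follows by minimising each bound in~\eqref{eq:LO_gradient} over $\alpha>0$ via a routine AM--GM computation.

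The most delicate step is the remainder estimate: because $R(x)$ depends on $\rho(x)$, the substitution $y = x + tv$ generates $\rho(y-tv)$ and $R(y-tv)$ factors that do not separate cleanly, and the pointwise invariant $\rho\,R^3 \equiv c^3$ is the crucial tool for disentangling them. Obtaining the sharp numerical constants $0.001206$ and $0.2097$ will require careful optimisation of the smearing profile (uniform ball vs.\ uniform sphere vs.\ a tunable family) as well as of each intermediate inequality.
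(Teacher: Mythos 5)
Your setup (Onsager smearing, the identities $D(\mu_\bx,\mu_\bx)=3/(5R(\bx))$ and $\int W_{R}=2\pi R^2/5$, and the value of the bracket $\tfrac{3}{5c}+\tfrac{2\pi c^2}{5}$) matches the paper's Appendix A, but the mechanism by which you generate the $\alpha\int\rho^{4/3}$ slack and, more importantly, your remainder estimate do not work as sketched. The paper keeps $c$ fixed at the optimal $(3/(4\pi))^{1/3}$ and obtains the $+\alpha\int\rho^{4/3}$ term by a cutoff \emph{inside the remainder}: after symmetrizing in $\bx,\by$ and splitting the screened kernel $\Psi=\Psi_1-\Psi_2$ into monotone pieces (the $\Psi_1$ part is discarded by a sign argument), the region where $\max(\rho(\bx),\rho(\by))^{1/3}>\theta c/|\bx-\by|$ is charged to $\theta^{-1}\int\rho^{4/3}$, and on the complementary region the support of $\Psi_2'$ in $[r_*,1]$ forces the \emph{two-sided} comparison $cr_*/|\bx-\by|\le\rho(\by+t(\bx-\by))^{1/3}\le c/|\bx-\by|$, hence $\max(\rho(\bx),\rho(\by))\le(\theta/r_*)^3\rho(\by+t(\bx-\by))$. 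This last comparison is exactly what lets the density prefactor be re-attached to the point where the gradient is evaluated before the change of variables $\bz=\by+t(\bx-\by)$. Your version has no analogue of it: after your substitution $\by=\bx+t\bv$, the quantity you must control is $\sup_\by\int\rho(\by-t\bv)\,\1\big(|\bv|\le R(\by-t\bv)\big)\,{\rm d}\bv$, and the invariant $\rho R^3\equiv c^3$ only gives $\rho(\by-t\bv)\,\1(|\bv|\le R(\by-t\bv))\le c^3|\bv|^{-3}$, whose integral diverges logarithmically over many scales (a density with an approximate $|\bz|^{-3}$ profile makes this supremum arbitrarily large). So the claimed $|{\rm Rem}|\le K_1(c)\int|\nabla\rho|$ with $K_1(c)$ finite and uniform in $\rho$ does not follow — and note that if it did hold with $K_1$ bounded as $c\to c^*$, you would obtain the inequality with \emph{no} $\alpha$-slack at all, a result considerably stronger than anything in the paper or in Benguria--Bley--Loss, which should have been a warning sign.

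The second alternative has a further, independent gap: after writing $\rho(\bx)-\rho(\by)=3\max(\rho^{2/3})\,[\rho^{1/3}(\bx)-\rho^{1/3}(\by)]$ and applying the fundamental theorem of calculus twice, the two factors $|\nabla\rho^{1/3}|$ are evaluated at \emph{different} intermediate points $\by+s(\bx-\by)$ and $\by+t(\bx-\by)$. Cauchy--Schwarz then leaves you with $\int|\nabla\rho^{1/3}(\bz)|\,{\rm M}(\bz)\,\dz$ where ${\rm M}$ is a Hardy--Littlewood-type maximal function of $|\nabla\rho^{1/3}|$ built from the kernel $\chi(r)=r^{-1}\Psi_2'(r)$; obtaining a bound of the form $\text{const}\cdot\int|\nabla\rho^{1/3}|^2$ requires an $L^2\to L^2$ bound for this maximal operator (the paper's Lemma 1, proved via the Hopf--Dunford--Schwartz ergodic theorem applied to the heat semigroup, giving the constant $7.5831$). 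A ``routine'' Cauchy--Schwarz in the spatial integral does not produce this, and the numerical constants $0.001206$ and $0.2097$ are determined precisely by the integrals $\int_{r_*}^1\Psi_2'(t)t^{-1}\,\dt$ and this maximal-function constant, not by tuning the smearing profile.
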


The inequalities are very similar to that of~\cite{BenBleLos-12}, except that they only involve local quantities. If the gradient term is very small, then only the first term matters in~\eqref{eq:LO_gradient2} and the constant is approximately $1.4508$. 
For classical Jellium in infinite volume, the lower bound $-1.4508\,\rho^{4/3}$ on the energy per unit volume has already been proved in~\cite{LieNar-75}. The classical Jellium energy is at most $-1.4442\,\rho^{4/3}$, which is computed by placing the particles on the BCC lattice, as suggested first by Wigner~\cite{Wigner-34}, and is remarkably near $-1.4508\,\rho^{4/3}$.
It is sometimes suggested in the literature that the same holds for the indirect energy~\cite{Perdew-91,PerWan-92,LevPer-93,OdaCap-07,RasPitCapPro-09,GiuVig-05}. However, the link between $E_{\rm Ind}$ and Jellium appears to be more involved than expected and we discuss this more fully in Appendix~\ref{app:Wigner}. The lowest value that we can rigorously achieve for the indirect energy at constant density is $-0.9507\rho^{4/3}$.

Inserting $|\nabla \rho(\bx)|=3\rho(\bx)^{2/3}|\nabla\rho^{1/3}(\bx)|$
in the first bound in~\eqref{eq:LO_gradient2} and using Schwarz's inequality, one immediately gets 
 \begin{multline}
 E_{\rm Ind} \geq -\frac35 \left(\frac{9\pi}{2}\right)^{\tfrac13}\int_{\R^3}\rho(\bx)^{\tfrac43}\,\dx\\
-0.4304\left(\int_{\R^3}|\nabla\rho^{\frac13}(\bx)|^2\,\dx\right)^{\tfrac18}\left(\int_{\R^3} \rho(\bx)^{\tfrac43}\,\dx\right)^{\tfrac78}.
\label{eq:LO_gradient3}
\end{multline}
The constant is smaller than in~\eqref{eq:LO_gradient2}, but the gradient
term has the power $1/8$ instead of $1/3$. 
Taking (\ref{eq:LO_gradient2}) line 2 to the power 3/5 times the last term of 
(\ref{eq:LO_gradient3}) to the power 2/5 gives the bound
\begin{multline}
 E_{\rm Ind} \geq -\frac35
\left(\frac{9\pi}{2}\right)^{\tfrac13}\int_{\R^3}\rho(\bx)^{\tfrac43}\,\dx\\
-0.7651\left(\int_{\R^3}|\nabla\rho^{\frac13}(\bx)|^2\,\dx\right)^{\tfrac14}
\left(\int_{\R^3} \rho(\bx)^{\tfrac43}\,\dx\right)^{\tfrac34}.
\label{eq:LO_gradient4}
\end{multline}
which is comparable in form to (\ref{eq:LO_gradient2}) line 1. 

Our proof for the second bound in~\eqref{eq:LO_gradient2} is more involved
and relies on an inequality for a Hardy-Littlewood-type maximal function
(see~\eqref{eq:HL_ineq_chi} in Lemma~\ref{lem:HL_chi} below). We can expect
to improve our constants when improvements to~\eqref{eq:HL_ineq_chi} are
found. We emphasize that our first bound in~\eqref{eq:LO_gradient2} does not involve anything as complicated as the maximal function.

We remark that the bounds in~\eqref{eq:LO_gradient} and~\eqref{eq:LO_gradient2} apply equally for bosons and fermions, as is already the case for~\eqref{eq:usual_LO} and all the other existing estimates of this kind. It is a challenging open problem to derive a better estimate taking the fermionic statistics into account~\cite{BacDel-14}.

The local term $\int_{\R^3}|\nabla\rho(\bx)|\,\dx$ has the same scaling behavior as the non local term $(\sqrt\rho,|\nabla|\sqrt\rho)$ used by Benguria, Bley and Loss in~\cite{BenBleLos-12}, and it seems a natural alternative. We remark that, by the theorem of M. and T. Hoffmann-Ostenhof in~\cite{Hof-77}, the local gradient
$|\nabla\rho(\bx)|$ can be controlled by the local kinetic energy density of the $N$ particles defined by~\cite{BacDel-14} 
\begin{equation}
\tau(\bx)=\frac{2}{N-1}\int_{\R^3}\big[\nabla_\bx\cdot\nabla_{\bx'}\Gamma^{(2)}\big](\bx,\by;\bx,\by)\,\dy,
\label{eq:def_t}
\end{equation}
as follows:
$$|\nabla\rho(\bx)|=2\sqrt{\rho(\bx)}|\nabla\sqrt{\rho}(\bx)|\leq 2\sqrt{\rho(\bx)}\;\sqrt{\tau(\bx)}.$$
In particular, by Schwarz's inequality, $\int_{\R^3}|\nabla\rho(\bx)|\,\dx$ is always finite if the total quantum kinetic energy of the $N$ particles is finite.

On the other hand, the gradient term $|\nabla\rho^{1/3}(\bx)|^2$ is a well-known second-order correction to the correlation energy in the regime of a high and slowly varying density~\cite{MaBru-68}. This term is also often used in density functional theory, especially since the appearance of the highly cited paper~\cite{PerBurErn-96} of Perdew, Burke and Ernzerhof, who proposed the use of the (PBE) exchange correlation functional\footnote{The functional is written here for a system of unpolarized electrons, for simplicity.}
\begin{multline*}
-\frac{3^{\frac43}}{4\pi^{\frac13}}\int_{\R^3}\rho(\bx)^{\frac43}\left(1+\frac{\mu|\nabla\rho^{\frac13}(\bx)|^2}{\rho(\bx)^{\frac43}+(\mu/\nu)|\nabla\rho^{\frac13}(\bx)|^2}\right)\,\dx\\
+\int_{\R^3}\rho(\bx)\,\varepsilon_{\rm c}\big(\rho(\bx)\big)\,\dx\\
+\gamma\int_{\R^3}\dx\, \rho(\bx)\log\Bigg(1+\frac{\beta}{\gamma}|\nabla\rho^{\frac13}(\bx)|^2\times\\
\times\frac{\rho(\bx)+A\big(\rho(\bx)\big)|\nabla\rho^{\frac13}(\bx)|^2}{\rho(\bx)^2+A\big(\rho(\bx)\big)\rho(\bx)|\nabla\rho^{\frac13}(\bx)|^2+A\big(\rho(\bx)\big)^2|\nabla\rho^{\frac13}(\bx)|^4}\Bigg).
\end{multline*}
Here $\varepsilon_{\rm c}\big(\rho(\bx)\big)$ is the (unknown~\cite{PerWan-92}) correlation energy of the interacting Coulomb gas, 
$$A\big(\rho(\bx)\big)=\frac{\beta}{\gamma}\left(e^{-{\varepsilon_{\rm c}(\rho(\bx))}/\gamma}-1\right)^{-1}$$
and $\mu,\nu,\gamma,\beta$ are constants which are uniquely determined from a list of reasonable physical requirements. In particular, the value of $\nu$ was chosen in relation with the original Lieb-Oxford inequality~\eqref{eq:usual_LO}. Our bound could be used to improve the existing density functionals.

For large atoms where the quantum mechanical density $\rho(\bx)$ is known to behave semi-classically~\cite{Lieb-81b}, it was noticed in~\cite[Appendix]{BenBleLos-12} that the non-local correction term~\eqref{eq:nonlocal} is much smaller than the first term, and the situation is the same in our case. Indeed, if we plug in the ansatz $\rho(\bx)=Z^2\, \tilde\rho(Z^{1/3}\bx)$ with a fixed density $\tilde\rho$, as is the case 
in Thomas-Fermi theory, then we find
\begin{align*}
\int_{\R^3}\rho(\bx)^{4/3}\,\dx&=Z^{5/3}\int_{\R^3}\tilde\rho(\bx)^{4/3}\,\dx,\\
(\sqrt{\rho},|\nabla|\sqrt{\rho})&=Z^{4/3}(\sqrt{\tilde\rho},|\nabla|\sqrt{\tilde\rho}),\\
\int_{\R^3}|\nabla\rho(\bx)|\,\dx&=Z^{4/3}\int_{\R^3}|\nabla\tilde\rho(\bx)|\,\dx,\\
\int_{\R^3}|\nabla\rho^{1/3}(\bx)|^2\,\dx&=Z\int_{\R^3}|\nabla\tilde\rho^{1/3}(\bx)|^2\,\dx.
\end{align*}
In the regime $Z\gg1$, the term involving $|\nabla\rho^{1/3}(\bx)|^2$ is one order lower than $|\nabla\rho(\bx)|$. However, we remark that, in contrast to $|\nabla\tilde\rho(\bx)|$ which is always controlled by the kinetic energy density, for a general quantum state $|\nabla\tilde\rho^{1/3}(\bx)|^2$ can be very large. If we use the Thomas-Fermi ground state density $\tilde\rho(\bx)=\rho_{\rm TF}(\bx)$, the integral $\int|\nabla\tilde\rho^{1/3}(\bx)|^2\,\dx$ actually diverges at $\bx=0$~\cite{MaBru-68}. This is not devastating since the Thomas-Fermi density must be modified at small distances by the Scott correction. Nevertheless, this suggests that $|\nabla\rho(\bx)|$ might be a less sensitive alternative in density functional theory.

The rest of the paper is devoted to the proof of our main result.

\section{Proof of the two inequalities}

Unless specified, all the integrals are over the Euclidean space $\R^3$.
Our starting point is the following inequality, taken from the Lieb-Oxford paper~\cite{LieOxf-80},
\begin{multline}
E_{\rm Ind}\geq -\frac{3}{5}\left(\frac{9\pi}{2}\right)^{1/3}\int_{\R^3}\rho(\bx)^{4/3}\,\dx\\+2\int_{\R^3}\rho(\bx)D\big(\rho-\rho(\bx),\mu_\bx-\delta_\bx\big)\,\dx,
\label{eq:starting}
\end{multline}
where 
\begin{align}
\mu_\bx(\by)&:=\frac{3}{4\pi R(\bx)^{3}}\,\1\big(|\by-\bx|\leq R(\bx)\big)\nn\\
&=\rho(\bx)\, \1\left(|\by-\bx|^3\leq\frac{3}{4\pi\rho(\bx)}\right)
\label{eq:def_mu}
\end{align}
is the normalized uniform measure of the ball centered at $\bx$ with radius
\begin{equation}
R(\bx):=\left(\frac{3}{4\pi\rho(\bx)}\right)^{1/3}.
\end{equation}
In Eq.~\eqref{eq:def_mu}, the notation $\1\big(|\by-\bx|\leq R(\bx)\big)$ means 
$$\1\big(|\by-\bx|\leq R(\bx)\big)=\begin{cases}
1 &\text{ if $|\by-\bx|\leq R(\bx)$,}\\
0 &\text{ otherwise,}
\end{cases}
$$
and a similar convention is used for all the other expressions involving $\1$. The derivation of~\eqref{eq:starting} is outlined in Appendix~\ref{app:derivation} for the convenience of the reader. We denote the correction term by
$$\text{Corr}:=2\int_{\R^3}\rho(\bx)D\big(\rho-\rho(\bx),\mu_\bx-\delta_\bx\big)\,\dx$$
and our goal is to find a lower bound to it using the gradient of $\rho(\bx)$.

The Coulomb potential of a point charge screened by a uniform distribution in a ball is $(r^2/2+1/r-3/2)\1(0\leq r\leq1):=-\Phi(r)$. Introducing the function
$$0\leq \Psi(r)=-r^4\Phi(r)=r^4\left(\frac{r^2}2+\frac{1}{r}-\frac{3}{2}\right)\1(0\leq r\leq1),$$
we can then write the correction term as
\begin{align}
\text{Corr}&=\frac{3}{4\pi}\iint\frac{\rho(\bx)-\rho(\by)}{|\bx-\by|^4}\Psi\left(\frac{|\bx-\by|}{R(\bx)}\right)\,\dx\,\dy\nn\\
&=\frac{3}{8\pi}\iint\frac{\rho(\bx)-\rho(\by)}{|\bx-\by|^4}\times\nn\\
&\quad\times\left(\Psi\left(\frac{|\bx-\by|}{R(\bx)}\right)-\Psi\left(\frac{|\bx-\by|}{R(\by)}\right)\right)\dx\,\dy.
\label{eq:error_term}
\end{align}
In the second line we have simply exchanged the role of $\bx$ and $\by$ to get a more symmetric expression. 

As remarked in~\cite{LieOxf-80}, the function $\Psi$ is first increasing on $[0,r_*]$ and then decreasing on $[r_*,1]$ with $r_*=(\sqrt{5}-1)/2$. Hence we can write $\Psi$ as a difference of two non-negative increasing functions as follows
$$\Psi=\int_0^r(\Psi')_+-\int_0^r(\Psi')_-:=\Psi_1-\Psi_2,$$
where $(\Psi')_+:=\max(\Psi',0)$ and $(\Psi')_-:=\max(-\Psi',0)$ are, respectively, the positive and negative parts of $\Psi'$. Note that $\Psi_1$ and $\Psi_2$ are constant for $r\geq1$, equal to $\Psi(r_*)\simeq0.04509$. 
From the monotonicity of $\Psi_1$, we deduce that
\begin{multline}
\text{Corr}\geq -\frac{3}{8\pi}\iint\frac{\rho(\bx)-\rho(\by)}{|\bx-\by|^4}\times\\
\times\left(\Psi_2\left(\frac{|\bx-\by|}{R(\bx)}\right)-\Psi_2\left(\frac{|\bx-\by|}{R(\by)}\right)\right)\,\dx\,\dy.
\label{eq:to_be_studied}
\end{multline}
Since $\Psi_2(r)=0$ for $r\leq r_*$, we may always add the constraint that
\begin{multline*}
|\bx-\by|\max\left(R(\bx)^{-1},R(\by)^{-1}\right)\\=\frac{|\bx-\by|}{c}\max\left(\rho(\bx),\rho(\by)\right)^{1/3}\geq r_*, 
\end{multline*}
where $c=({3}/{4\pi})^{1/3}$. We now introduce a parameter $\theta>r_*$ and split the right side of~\eqref{eq:to_be_studied} as a sum of two terms:
\begin{multline}
\text{Corr}_1=-\frac{3}{8\pi}\iint_{\max(\rho(\bx),\rho(\by))^{1/3}> \frac{\theta c}{|\bx-\by|}}\frac{\rho(\bx)-\rho(\by)}{|\bx-\by|^4}\times\\
\times\left(\Psi_2\left(\frac{|\bx-\by|}{R(\bx)}\right)-\Psi_2\left(\frac{|\bx-\by|}{R(\by)}\right)\right)\,\dx\,\dy
\label{eq:Corr1}
\end{multline}
that will be bounded in terms of $\theta^{-1}\int\rho(\bx)^{4/3}\,\dx$, and
\begin{multline}
\text{Corr}_2=\\-\frac{3}{8\pi}\iint_{\frac{r_*c}{|\bx-\by|}\leq \max(\rho(\bx),\rho(\by))^{1/3}\leq \frac{\theta c}{|\bx-\by|}}\frac{\rho(\bx)-\rho(\by)}{|\bx-\by|^4}\times\\ \times\left(\Psi_2\left(\frac{|\bx-\by|}{R(\bx)}\right)-\Psi_2\left(\frac{|\bx-\by|}{R(\by)}\right)\right)\,\dx\,\dy.
\label{eq:Corr2}
\end{multline}
that will be bounded in two different ways, using $\theta^{3}\int|\nabla\rho(\bx)|\,\dx$ and $\theta^2\int|\nabla\rho^{1/3}(\bx)|^2\,\dx$. The parameter $\theta$ will be chosen proportional to $1/\alpha$ which appears in~\eqref{eq:LO_gradient}.

\subsection{Bound on $\text{Corr}_1$ in terms of $\rho(\bx)^{4/3}$}
In order to derive a lower bound on $\text{Corr}_1$, we simply use the fact that $\Psi_2\leq \Psi(r_*)$ and obtain
\begin{align*}
\text{Corr}_1&\geq -\frac{3}{4\pi}\Psi(r_*)\iint_{\substack{\rho(\bx)^{1/3}> \frac{\theta c}{|\bx-\by|}\\ \rho(\by)\leq\rho(\bx)}}\frac{\rho(\bx)}{|\bx-\by|^4} \,\dx\,\dy\\
&\geq -3\Psi(r_*)\int\rho(\bx)\left(\int_{\frac{\theta c}{\rho(\bx)^{1/3}}}^\ii\frac{{\rm d}r}{r^2}\right)\,\dx\\
&= -\frac{3\Psi(r_*)}{\theta c}\int\rho(\bx)^{4/3}\,\dx.
\end{align*}
By defining 
$\alpha={3\Psi(r_*)}/{\theta c}\simeq {0.2180}/{\theta},$
this gives the first part of the correction term in~\eqref{eq:LO_gradient}. Note that we have assumed that $\theta>r_*$, which yields the condition $\alpha\leq 3\Psi(r_*)/(cr_*)\simeq0.3528$. For $\alpha>0.3528$, our bound~\eqref{eq:LO_gradient} is already worse than the Lieb-Oxford bound~\eqref{eq:usual_LO} and there is nothing to prove.

\subsection{Bound on $\text{Corr}_2$ in terms of $|\nabla\rho(\bx)|$}
In the formula of ${\rm Corr}_2$, we insert the fundamental theorem of calculus in the form
\begin{multline*}
\Psi_2\left(\frac{|\bx-\by|\rho(\bx)^{\frac13}}{c}\right)-\Psi_2\left(\frac{|\bx-\by|\rho(\by)^{\frac13}}{c}\right)\\
=\frac{|\bx-\by|}{c}\int_0^1(\bx-\by)\cdot \nabla\rho^{\frac13}(\by+t(\bx-\by))\times\\
\times\Psi_2'\left(\frac{|\bx-\by|\rho(\by+t(\bx-\by))^{\frac13}}{c}\right)\,\dt
\end{multline*}
and obtain
\begin{align*}
\text{Corr}_2
&\geq -\frac{c^2}{2}\int_0^1\dt\iint_{\frac{r_*c}{|\bx-\by|}\leq \max(\rho(\bx),\rho(\by))^{1/3}\leq \frac{\theta c}{|\bx-\by|}}\\
&\qquad\frac{\max(\rho(\bx),\rho(\by))}{|\bx-\by|^2}|\nabla\rho^{\frac13}(\by+t(\bx-\by))|\times\\
&\qquad\times
\Psi_2'\left(\frac{|\bx-\by|\rho(\by+t(\bx-\by))^{\frac13}}{c}\right)\,\dx\,\dy.
\end{align*}
Since $\Psi_2'$ is supported in $[r_*,1]$, we have
$$\frac{cr_*}{|\bx-\by|}\leq \rho(\by+t(\bx-\by))^{\frac13}\leq \frac{c}{|\bx-\by|},\qquad\forall\; 0\leq t\leq1.$$
From the constraint in the integral, we therefore get the relation
\begin{equation}
\max(\rho(\bx),\rho(\by))^{1/3}\leq \frac{\theta c}{|\bx-\by|}\leq \frac{\theta}{r_*}\rho(\by+t(\bx-\by))^{\frac13}
\label{eq:relation_rho}
\end{equation}
and hence
\begin{multline*}
\text{Corr}_2\geq -\frac{c^2}{2(r_*)^3}\theta^3\int_0^1\dt\iint\frac{\rho(\by+t(\bx-\by))}{|\bx-\by|^2}\times\\
\times|\nabla\rho^{\frac13}(\by+t(\bx-\by))|\times\\ \times\Psi_2'\left(\frac{|\bx-\by|\rho(\by+t(\bx-\by))^{\frac13}}{c}\right)\,\dx\,\dy.
\end{multline*}
Introducing $\bz=\by+t(\bx-\by)$, this is the same as
\begin{multline*}
\text{Corr}_2\geq-\frac{c^2}{2(r_*)^3}\theta^3\iint\frac{\rho(\bz)}{|\bz-\by|^2}\times\\
\times|\nabla\rho^{\frac13}(\bz)|\left(\int_0^1\Psi_2'\left(\frac{|\bz-\by|\rho(\bz)^{\frac13}}{ct}\right)\frac{\dt}{t}\right)\,\dz\,\dy.
\end{multline*}
To deal with the term in the parenthesis, we note that
$$\int_0^1\Psi_2'\left(\frac{A}{t}\right)\frac{\dt}{t}=\int_A^{\ii}\Psi_2'\left(t\right)\frac{\dt}{t}\leq \1(A\leq 1)\int_{r_*}^{1}\Psi'_2(t)\frac{\dt}{t}$$
for every $A\geq0$, and obtain
\begin{align*}
\text{Corr}_2&\geq -\frac{c^2}{2(r_*)^3}\theta^3\left(\int_{r_*}^{1}\frac{\Psi_2'(t)}{t}\dt\right)\times\\
&\qquad\times\iint_{|\bz-\by|\leq c\rho(\bz)^{-1/3}}\frac{\rho(\bz)}{|\bz-\by|^2}
|\nabla\rho^{\frac13}(\bz)|\,\dz\,\dy\\
&= -\frac{2\pi c^3}{(r_*)^3}\theta^3\left(\int_{r_*}^{1}\frac{\Psi_2'(t)}{t}\dt\right)\int\rho(\bz)^{2/3}
|\nabla\rho^{\frac13}(\bz)|\,\dz\\
&= -\frac{2\pi c^3}{3(r_*)^3}\theta^3\left(\int_{r_*}^{1}\frac{\Psi_2'(t)}{t}\dt\right)\int|\nabla\rho(\bz)|\,\dz.
\end{align*}
Inserting the definition of $\alpha$, we find
\begin{equation}
\text{Corr}_2\geq -\frac{18\pi \Psi(r_*)^3}{(r_*)^3\alpha^3}\left(\int_{r_*}^{1}\frac{\Psi_2'(t)}{t}\dt\right)\int|\nabla\rho(\bz)|\,\dz
\end{equation}
which concludes the derivation of the first inequality in~\eqref{eq:LO_gradient} since
$$\frac{18\pi \Psi(r_*)^3}{(r_*)^3}\left(\int_{r_*}^{1}\frac{\Psi_2'(t)}{t}\dt\right)\simeq0.001206.$$

\subsection{Bound on $\text{Corr}_2$ in terms of $|\nabla\rho^{1/3}(\bx)|^2$}
Our second bound on ${\rm Corr}_2$ is more involved and it relies on the theory of Hardy-Littlewood maximal functions~\cite{Grafakos-book}. 
First we use
$$|\rho(\bx)-\rho(\by)|\leq 3\max(\rho(\bx)^{\frac23},\rho(\by)^{\frac23})\big|\rho(\bx)^{\frac13}-\rho(\by)^{\frac13}\big|$$
and then appeal twice to the fundamental theorem of calculus. We find
\begin{multline*}
\text{Corr}_2\geq\\
-\frac{3c^2}{2}\int_0^1\dt\int_0^1\ds\iint_{\frac{r_*c}{|\bx-\by|}\leq \max(\rho(\bx),\rho(\by))^{1/3}\leq \frac{\theta c}{|\bx-\by|}}\\ \frac{\max(\rho(\bx),\rho(\by))^{2/3}}{|\bx-\by|}|\nabla\rho^{\frac13}(\by+s(\bx-\by))|\times\\
\times|\nabla\rho^{\frac13}(\by+t(\bx-\by))|\times\\ \times \Psi_2'\left(\frac{|\bx-\by|\rho(\by+t(\bx-\by))^{\frac13}}{c}\right)\,\dx\,\dy.
\end{multline*}
As before, we may use the constraint~\eqref{eq:relation_rho} and obtain
\begin{multline*}
\text{Corr}_2\geq\\ -\frac{3\theta^2c^2}{2(r_*)^2}\int_0^1\dt\int_0^1\ds\iint\frac{\rho(\by+t(\bx-\by))^{2/3}}{|\bx-\by|}\times\\ \times|\nabla\rho^{\frac13}(\by+s(\bx-\by))|\;|\nabla\rho^{\frac13}(\by+t(\bx-\by))|\times\\ \times \Psi_2'\left(\frac{|\bx-\by|\rho(\by+t(\bx-\by))^{\frac13}}{c}\right)\,\dx\,\dy
\end{multline*}
or equivalently, after a change of variables,
\begin{multline}
\text{Corr}_2\geq-\frac{3\theta^2c^4}{2(r_*)^2}\int_0^1\dt\int_0^1\,\ds\int\dz\; |\nabla\rho^{\frac13}(\bz)|\times\\
\times\left(\frac{\rho(\bz)^{2/3}}{c^2|t-s|^2}\int\frac{|\nabla\rho^{\frac13}(\bz+\bu)|}{|\bu|}
\; \Psi_2'\left(\frac{|\bu|\rho(\bz)^{\frac13}}{c|t-s|}\right)\,\du\right).\label{eq:to_be_estimated}
\end{multline}
In order to control the term in the parenthesis, we introduce the positive function $\chi(r)=r^{-1}\Psi_2'(r)=-r^{-1}\Psi'(r)\1(r_*\leq r\leq 1)$ which is increasing on $[r_*,s_{*}]$ and decreasing on $[s_{*},1]$, with $s_*\simeq 0.8376$. For every square-integrable function $f$, we define the following Hardy-Littlewood-type function
\begin{equation}
{\rm M}^\chi_f(\bz):=\sup_{r>0}\left\{r^{-3}\int \chi(|\bu|/r)\,|f(\bz+\bu)|\,\du\right\}.
\label{eq:HL-chi}
\end{equation}
From this definition we have 
\begin{multline*}
\frac{\rho(\bz)^{2/3}}{c^2|t-s|^2}\int\frac{|\nabla\rho^{\frac13}(\bz+\bu)|}{|\bu|}
\; \Psi_2'\left(\frac{|\bu|\rho(\bz)^{\frac13}}{c|t-s|}\right)\,\du\\
\leq {\rm M}^\chi_{|\nabla\rho^{1/3}|}(\bz) 
\end{multline*}
for all $\bz\in\R^3$ and hence, by Schwarz's inequality,
\begin{align*}
\text{Corr}_2&\geq-\frac{3\theta^2c^4}{2(r_*)^2}\int |\nabla\rho^{\frac13}(\bz)|\, {\rm M}^\chi_{|\nabla\rho^{1/3}|}(\bz)\,\dz\\
&\geq-\frac{3\theta^2c^4}{2(r_*)^2}\left(\int |\nabla\rho^{\frac13}(\bz)|^2\,\dz\right)^{\frac12}\times\\
&\qquad\times\left(\int {\rm M}^\chi_{|\nabla\rho^{1/3}|}(\bz)^2\,\dz\right)^{\frac12}.
\end{align*}

\begin{lemma}[Hardy-Littlewood-type inequality]\label{lem:HL_chi}
For every square-integrable function $f$ on $\R^3$, we have
\begin{equation}
\left(\int {\rm M}^\chi_{f}(\bz)^2\,\dz\right)^{1/2}\leq 7.5831\left(\int |f(\bz)|^2\,\dz\right)^{1/2}.
\label{eq:HL_ineq_chi}
\end{equation}
\end{lemma}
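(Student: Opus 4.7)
The plan is to bound $M^\chi_f$ pointwise by the classical centered Hardy--Littlewood maximal operator $M$ on $\R^3$ and then invoke its $L^2$-boundedness. The key observation is that $\chi\geq 0$ is compactly supported in $[r_*,1]$ and unimodal, with a single maximum at $s_*$. I would first rescale via $\bu=r\bv$ so that the $r^{-3}$ and the Jacobian cancel, to rewrite
$$M^\chi_f(\bz)=\sup_{r>0}\int_{\R^3}\chi(|\bv|)\,|f(\bz+r\bv)|\,{\rm d}\bv.$$
Because $\chi$ is unimodal, each super-level set $\{\chi>s\}$ with $0<s<\chi(s_*)$ is an interval $(a(s),b(s))\subset(r_*,1)$. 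The layer-cake representation $\chi(|\bv|)=\int_0^{\chi(s_*)}\1(a(s)<|\bv|<b(s))\,ds$, combined with the trivial inclusion of the annulus into the larger ball, then yields $\chi(|\bv|)\leq\int_0^{\chi(s_*)}\1(|\bv|\leq b(s))\,ds$.

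Next, I insert this bound into the integral defining $M^\chi_f$, swap the two integrals by Fubini (all integrands are nonnegative), and use
$$\int_{|\bv|\leq b(s)}|f(\bz+r\bv)|\,{\rm d}\bv=\frac{4\pi b(s)^3}{3}\,A^{\rm ball}_{rb(s)}|f|(\bz),$$
where $A^{\rm ball}_R g(\bz)$ denotes the centered ball average of $g$ of radius $R$. Bounding each such average by $Mf(\bz)$ and taking the supremum in $r$ yields the pointwise inequality
$$M^\chi_f(\bz)\leq\frac{4\pi}{3}\,C_0\,Mf(\bz),\qquad C_0:=\int_0^{\chi(s_*)}b(s)^3\,ds=-\int_{s_*}^{1}b^3\chi'(b)\,db,$$
the second form coming from the change of variables $s=\chi(b)$ on the decreasing branch $[s_*,1]$. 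Equivalently, $\frac{4\pi C_0}{3}=\int_{\R^3}\bar\chi(|\bv|)\,{\rm d}\bv$ where $\bar\chi(t):=\sup_{\tau\geq t}\chi(\tau)$ is the non-increasing envelope of $\chi$, so the estimate above is just the standard bound for a radial maximal operator with a non-increasing majorant.

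The inequality~\eqref{eq:HL_ineq_chi} then follows from the $L^2\to L^2$ boundedness of the classical centered Hardy--Littlewood maximal operator:
$$\|M^\chi_f\|_{L^2}\leq\frac{4\pi C_0}{3}\,\|M\|_{L^2\to L^2}\,\|f\|_{L^2}.$$
Using the explicit expression $\chi(r)=-3r(r-1)(r^2+r-1)$ obtained from $\Psi'(r)=3r^5-6r^3+3r^2$, together with the defining equation $4s_*^3-4s_*+1=0$, the constant $C_0$ is numerically evaluable, and the value $7.5831$ should emerge as the product of $\frac{4\pi C_0}{3}$ with the best available estimate for $\|M\|_{L^2\to L^2}$ on $\R^3$ (typically obtained from a Vitali-type covering bound combined with Marcinkiewicz interpolation). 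The main obstacle is precisely this final optimization: neither the sharp $L^2$ operator norm of the classical maximal function in $\R^3$ nor a sharp annular maximal inequality is known in closed form, which is exactly why the authors anticipate later improvements. Replacing the crude annulus-into-ball inclusion by an annular maximal estimate would tighten the argument but would require proving a new maximal inequality in its own right.
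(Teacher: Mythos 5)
Your argument is sound as far as it goes, but it is precisely the paper's \emph{preliminary} computation, and it stops at the constant $8.2163$, not the $7.5831$ claimed in the lemma. Your pointwise bound $ {\rm M}^\chi_f\le\frac{4\pi}{3}C_0\,{\rm M}_f$ with $C_0=\int_{s_*}^1 s^3|\chi'(s)|\,{\rm d}s$ is exactly the paper's inequality (your layer-cake / non-increasing-envelope derivation and the paper's integration by parts against $\chi'$ are the same computation). Feeding this into the quotable $L^2$ bound for the centered maximal operator, $\int {\rm M}_f(\bz)^2\,\dz\le 4\cdot 3^3\int|f(\bz)|^2\,\dz$, i.e.\ $\norm{{\rm M}}_{L^2\to L^2}\le 2\cdot 3^{3/2}\simeq 10.39$, gives $8\pi\sqrt3\,C_0\simeq 8.2163$. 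To reach $7.5831$ by your route you would need $\norm{{\rm M}}_{L^2\to L^2}\le 9.6$ on $\R^3$, and you do not supply (nor does the literature readily supply) such a bound; as you yourself note, the sharp $L^2$ norm of the centered maximal operator is unknown. So the stated constant is not proved by your argument.

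The idea you are missing is the paper's switch to a \emph{different} majorizing maximal function. Instead of ball averages, the paper dominates $\chi$ by the profile of the time-averaged heat kernel, $\chi(r)\le {\rm K}(T)\int_{r/(2\sqrt T)}^\infty e^{-s^2}\,{\rm d}s\,/(2\pi^{3/2}rT)$, so that ${\rm M}^\chi_f\le{\rm K}(T)\,\tilde{\rm M}_f$ where $\tilde{\rm M}$ is the maximal operator of the time-averaged heat semigroup. The Hopf--Dunford--Schwartz ergodic theorem gives $\int\tilde{\rm M}_f(\bz)^2\,\dz\le 8\int|f(\bz)|^2\,\dz$, i.e.\ operator norm $2\sqrt2\simeq 2.83$, far better than $10.39$; the price is a worse pointwise comparison constant $\min_{T>0}{\rm K}(T)\simeq 2.681$ (versus your $\frac{4\pi}{3}C_0\simeq 0.79$), but the product $2\sqrt2\times 2.681\simeq 7.5831$ beats $0.79\times 10.39\simeq 8.216$. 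Without this substitution (or some comparable improvement on the classical maximal theorem), your proof establishes the lemma only with the weaker constant $8.2163$, which the paper records as a remark rather than as the statement of the lemma.
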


The optimal constant in~\eqref{eq:HL_ineq_chi} is not known and the value $7.5831$ is based on a numerical calculation explained below. In the proof of Lemma~\ref{lem:HL_chi} below we also provide a simple explicit bound which yields the slightly worse constant $8.2163$.

Using~\eqref{eq:HL_ineq_chi} with $f=|\nabla\rho^{1/3}|$ immediately gives our final bound
\begin{align*}
\text{Corr}_2&\geq-\frac{3\theta^2c^4}{2(r_*)^2}\cdot7.5831 \int |\nabla\rho^{\frac13}(\bz)|^2\,\dz\\
&=-\underbrace{\frac{27\Psi(r_*)^2c^2}{2(r_*)^2\alpha^2}\cdot7.5831}_{\simeq \frac{0.2097}{\alpha^2}} \int |\nabla\rho^{\frac13}(\bz)|^2\,\dz.
\end{align*}
This concludes our study of the correction term ${\rm Corr}_2$ and the derivation of the second inequality in~\eqref{eq:LO_gradient}. The rest of this section will be devoted to the 

\begin{proof}[Proof of Lemma~\ref{lem:HL_chi}]
For the convenience of the reader, we explain first an easier calculation yielding the constant $8.2163$. We write, for every $r>0$,
\begin{align*}
&r^{-3}\int \chi(|\bu|/r)|f(\bz+\bu)|\,\du\\
&=-\int_{r_*}^1 s^3\chi'(s)\left((rs)^{-3}\!\int|f(\bz+\bu)|\,\1(|\bu|\leq rs)\,\du\right)\!{\rm d}s\\
&\leq\int_{s_*}^{1} s^3|\chi'(s)|\left((rs)^{-3}\int|f(\bz+\bu)|\,\1(|\bu|\leq rs)\,\du\right)\!{\rm d}s.
\end{align*}
We recall that $\chi'\geq0$ on $[r_*,s_*]$ and $\chi'\leq0$ on $[s_*,1]$, with $s_*\simeq0.8376$.
The expression in the parenthesis can be controlled in terms of the usual Hardy-Littlewood maximal function~\cite{Grafakos-book} defined as
$${\rm M}_{f}(\bz):=\max_{A>0}\left\{\frac{3}{4\pi A^3}\int|f(\bz+\bu)|\,\1(|\bu|\leq A)\,\du\right\},$$
leading to the bound
$${\rm M}^\chi_f(\bz)\leq \frac{4\pi}{3}\left(\int_{s_*}^{1} s^3|\chi'(s)|\,{\rm d}s\right){\rm M}_{f}(\bz).$$
The Hardy-Littlewood theorem states that
\begin{equation}
\int{\rm M}_{f}(\bz)^2\,\dz\leq 4\times3^{3}\int|f(\bz)|^2\,\dz
\label{eq:Littlewood}
\end{equation}
(see, e.g.,~\cite[Thm. 2.1.6]{Grafakos-book}), and we obtain
\begin{multline*}
\left(\int {\rm M}^\chi_{f}(\bz)^2\,\dz\right)^{1/2}\\ \leq \underbrace{8\pi\sqrt{3}\left(\int_{s_*}^{1} s^3|\chi'(s)|\,{\rm d}s\right)}_{\simeq 8.2163} \left(\int |f(\bz)|^2\,\dz\right)^{1/2}. 
\end{multline*}

In order to get the better constant $7.5831$, we use another maximal function based on the time-averaged heat kernel
\begin{align*}
\tilde{\rm M}_{f}(\bz)&:=\sup_{T>0} \left\{\frac1T\int_0^T\left(\int \frac{e^{-\frac{|\bu|^2}{4t}}}{(4\pi t)^{3/2}}\,|f(\bz+\bu)|\,\du\right)\,{\rm d}t\right\}\\ &=\sup_{T>0} \left\{\int\left(\int_{\frac{|\bu|}{2\sqrt{T}}}^\ii \frac{e^{-s^2}}{2\pi^{3/2}|\bu|T}{\rm d}s\right)|f(\bz+\bu)|\,\du\right\}.
\end{align*}
A consequence of the Hopf-Dunford-Schwartz ergodic theorem~\cite[VIII.7.7]{DunfordSchwartz-1} is that
\begin{equation}
\int\tilde {\rm M}_f(\bz)^2\,\dz\leq 8\int|f(\bz)|^2\,\dz,
\label{eq:HDS}
\end{equation}
for every square-integrable function $f$, which is much better than~\eqref{eq:Littlewood}. For every fixed $T>0$, we introduce
$${\rm K}(T)=2\pi^{3/2}T\sup_{r_*\leq r\leq1}\left\{\frac{r\chi(r)}{\dps\int_{\frac{r}{2\sqrt{T}}}^\ii e^{-s^2}{\rm d}s}\right\}.$$
Then 
\begin{equation}
\chi(r)\leq {\rm K}(T)\int_{\frac{r}{2\sqrt{T}}}^\ii \frac{e^{-s^2}}{2\pi^{3/2}rT}{\rm d}s
\label{eq:estim_chi_heat}
\end{equation}
and, therefore, ${\rm M}^\chi_f\leq {\rm K}(T)\,\tilde{\rm M}_f$. Using~\eqref{eq:HDS}, we find
\begin{multline*}
\left(\int {\rm M}^\chi_{f}(\bz)^2\,\dz\right)^{1/2}\\ \leq 2\sqrt{2}\left(\min_{T>0}{\rm K}(T)\right) \left(\int |f(\bz)|^2\,\dz\right)^{1/2}. 
\end{multline*}
A numerical calculation gives that
$\min_{T>0}{\rm K}(T)\simeq2.68102$
which is attained for $T\simeq0.2762$, leading to the stated bound with the constant $7.5831$.
\end{proof}

%%%%%%%%%%%%%%%%%%%%%%%%%%%%%%%%%%%%%%%%%%
%%%%%%%%%%%%%%%%%%%%%%%%%%%%%%%%%%%%%%%%%%
\appendix
\section{Derivation of~\eqref{eq:starting}}\label{app:derivation}

We outline here the derivation of Eq.~\eqref{eq:starting}, for the convenience of the reader. We follow the original approach of Lieb and Oxford in~\cite{LieOxf-80} which is also explained in~\cite[Chap. 6]{LieSei-09}.

As before let 
\begin{multline}
\mu_\bx(\by):=\frac{3}{4\pi R(\bx)^{3}}\1\big(|\by-\bx|\leq R(\bx)\big),\\ \text{with}\quad R(\bx):=\frac{c}{\rho(\bx)^{1/3}},
\end{multline}
be the uniform measure of the ball $B_\bx$ centered at $\bx$, of radius $R(\bx)$. At this step the constant $c$ is arbitrary and we will explain later how the choice $c=(3/(4\pi))^{1/3}$ made in the text arises naturally.
An inequality due to Onsager~\cite{Onsager-39} (see~\cite[page 113]{LieSei-09}) allows us to control the many-particle Coulomb energy by a one-particle term as follows:
\begin{align*}
\sum_{1\leq k<\ell\leq N}\frac{1}{|\bx_k-\bx_\ell|}&\geq \sum_{1\leq k\neq \ell\leq N}D(\mu_{\bx_k},\mu_{\bx_\ell})\\
&\geq -D(\rho,\rho)+2\sum_{i=1}^N D(\rho,\mu_{\bx_i})\\
&\qquad\qquad-\sum_{i=1}^N D(\mu_{\bx_i},\mu_{\bx_i}).
\end{align*}
Integrating over the $N$-particle quantum state we obtain
\begin{align*}
E_{\rm Ind}&=\int\int\frac{\Gamma^{(2)}(\bx,\by;\bx,\by)}{|\bx-\by|}\,\dx\,\dy-D(\rho,\rho)\\
&\geq 2\int\rho(\bx)D(\rho,\mu_\bx-\delta_\bx)\,\dx-\int\rho(\bx)D(\mu_\bx,\mu_\bx)\,\dx.
\end{align*}
At this point this is exactly~\cite[Eq. (17)]{LieOxf-80} and~\cite[Eq. (6.4.8)]{LieSei-09}. 

Note that the Coulomb potential generated by the measure $\mu_\bx-\delta_\bx$ vanishes outside of $B_\bx$, and we can therefore subtract the constant $\rho(\bx)$ in the first term. We obtain 
\begin{multline*}
E_{\rm Ind}\geq \int\rho(\bx)^2D(1,\mu_\bx-\delta_\bx)\,\dx\\+2\int\rho(\bx)D\big(\rho-\rho(\bx),\mu_\bx-\delta_\bx\big)\,\dx-\int\rho(\bx)D(\mu_\bx,\mu_\bx)\,\dx.
\end{multline*}
By scaling we have
$$D(\mu_\bx,\mu_\bx)=\frac{D(\mu,\mu)}{R(\bx)}$$
and 
$$D(\1_{B_\bx},\mu_\bx-\delta_\bx)= R(\bx)^2 D(\1_{B},\mu-\delta_0),$$
where $\mu$ is the uniform measure of the unit ball $B$, with volume $|B|=4\pi/3$. This gives us 
\begin{multline*}
E_{\rm Ind}\geq \left(2c^2D(\1_{B},\mu-\delta_0)-\frac{D(\mu,\mu)}{c}\right)\int\rho(\bx)^{4/3}\,\dx\\
+2\int\rho(\bx)D\big(\rho-\rho(\bx),\mu_\bx-\delta_\bx\big)\,\dx.
\end{multline*}
We have 
\begin{align*}
D(\mu,\mu)&=\frac{|B|^{-2}}{2}\int_B\int_B\frac{\dx\,\dy}{|\bx-\by|}\\
&=\frac{|B|^{-2}}{2}\int_B\int_B\frac{\dx\,\dy}{\max(|\bx|,|\by|)}\\
&=|B|^{-2}(4\pi)^2\int_0^1rdr\int_0^r s^2ds=\frac35
\end{align*}
and
\begin{align*}
&2c^2D(\1_{B},\mu-\delta_0)-\frac{D(\mu,\mu)}{c}\\
&\qquad=2c^2D(\1_{B},\mu)-\frac{D(\mu,\mu)}{c}-c^2\int_{B}\frac{\dx}{|\bx|}\\
&\qquad=\left(\frac{8\pi c^2}{3}-\frac1c\right)D(\mu,\mu)-c^22\pi\\
&\qquad=-\frac{2\pi c^2}{5}-\frac{3}{5c}.
\end{align*}
Optimizing over $c$ gives $c=(3/(4\pi))^{1/3}$ and the constant $(3/5)({9\pi}/{2})^{1/3}\simeq 1.45079$. Therefore we have proved that
\begin{multline*}
E_{\rm Ind}\geq -\frac{3}{5}\left(\frac{9\pi}{2}\right)^{1/3}\int\rho(\bx)^{4/3}\,\dx\\+2\int\rho(\bx)D\big(\rho-\rho(\bx),\mu_\bx-\delta_\bx\big)\,\dx,
\end{multline*}
which is our starting point~\eqref{eq:starting}.

\section{Connection between the indirect energy and the classical Jellium problem}\label{app:Wigner}

In this appendix, we will discuss a relation between the Jellium energy at low density and the indirect energy, which is often mentioned in the literature~\cite{ColMar-60,Perdew-91,PerWan-92,LevPer-93,OdaCap-07,RasPitCapPro-09,GorSei-10,RasSeiGor-11}. These two problems look similar but in fact are different. This occasionally causes some confusion and we would like to clarify the difference here.

\subsubsection*{Classical Jellium}

First, consider the \emph{classical Jellium problem}, where there is a constant background density $\rho$ of negative charge in a bounded domain $\Omega$ and there are $N$ classical positive particles with locations $\bx_1,...,\bx_N\in\R^3$. For simplicity we assume global neutrality, in order to make contact with the exchange problem later on. By simple scaling we may assume that $\rho=1$ and we do so henceforth. Thus the total volume of the sample is $|\Omega|=N$. The problem is to choose the domain $\Omega$ and to locate the particles in $\R^3$ so as to minimize the total electrostatic energy
\begin{multline}
\cE_{\rm Jel}(\bx_i,\Omega)=\frac12\sum_{1\leq i\neq j\leq N}\frac1{|\bx_i-\bx_j|}-\sum_{i=1}^N\int_{\Omega}\frac{\dy}{|\bx_i-\by|}\\+\frac12\int_{\Omega}\int_{\Omega}\frac{\dx\dy}{|\bx-\by|},
\label{eq:Jellium_energy_N}
\end{multline}
whose minimum value will be called $E_{\rm Jel}(N)$. 
Nobody knows exactly what this minimum energy in the limit of large $N$ is, but it is believed that the lowest energy configuration is obtained by placing the particles on a BCC lattice. The calculation of this BCC energy is a computational problem that has been done very reliably several times with the result that, in our units with $\rho=1$,
$$\lim_{N\to\ii}\frac{\cE_{\rm Jel}({\rm BCC})}N=-1.4442\cdots.$$
The FCC and cubic lattices give numbers close to 1.4442, see~\cite{ColMar-60} and \cite[p.~43]{GiuVig-05}. 

One thing that is definitely known is that the BCC value must be very close to the true minimum because a rigorous lower bound to the Jellium energy, valid for any domain $\Omega$ and any positions of the particles, was given in \cite{LieNar-75} and reads
$$E_{\rm Jel}(N)\geq -(1.4508\cdots)N,$$
for any finite $N$. The constant is the same as in our Lieb-Oxford bound \eqref{eq:LO_gradient}.

Let us now give an expression of the Jellium energy in the large $N$ limit, for any sufficiently symmetric lattice of density $\rho=1$. Let $\cL$ be any Bravais lattice, with unit cell $Q$ centered at 0 and volume normalized to one. For $Q$ we choose the Wigner-Seitz (or Voronoi) cell, which has the maximum possible symmetry~\cite{AshcroftMermin}. We assume that it has no dipole and no quadrupole:
\begin{equation}
\int_Q \bx\,\dx=0,\qquad \int_Q x_ix_j\,\dx=\frac{\delta_{ij}}3 \int_Q x^2\,\dx. 
 \label{eq:nodipole_quadrupole}
\end{equation}
This is the case for the BCC, FCC and cubic lattices.
We then place the particles on the intersection of the lattice $\cL$ with a large cube $C$ containing $N$ particles and take $\Omega$ to be the union of the cells centered at the particles. After simple manipulations we find that the corresponding Jellium energy in~\eqref{eq:Jellium_energy_N} may be rewritten in the form
\begin{multline}
\cE_{\rm Jel}(\cL\cap C,\Omega)=\frac12\sum_{\bx\neq\by\in \cL\cap C}W(\bx-\by)-\frac{N}2\int_Q\frac{\dy}{|\by|}\\
-\frac12\sum_{\bx,\by\in \cL\cap C}W\ast\1_Q(\bx-\by)
\label{eq:Jellium_energy_N_lattice}
\end{multline}
where
$$W(\bx)=\frac1{|\bx|}-\int_Q\frac{\dy}{|\bx-\by|}$$
is the screened Coulomb potential and where $\ast$ denotes the convolution of two functions:
$$W\ast\1_Q(\bx)=\int_Q W(\bx-\by)\dy.$$
Because there is no dipole and no quadrupole by assumption, the function $W$ decays at infinity at least as fast as $|\bx|^{-4}$ and therefore the limit is readily seen to be
\begin{align}
e_{\rm Jel}(\cL)&:=\lim_{N\to\ii}\frac{\cE_{\rm Jel}(\cL\cap C,\Omega)}N\nn\\
&=\frac12\sum_{\bx\in\cL\setminus\{0\}}W(\bx)-\frac{1}2\int_Q\frac{\dy}{|\by|}
-\frac12\sum_{\bx\in\cL}W\ast\1_Q(\bx)\nn\\
&=\frac12\sum_{\bx\in\cL\setminus\{0\}}W(\bx)-\frac{1}2\int_Q\frac{\dy}{|\by|}
-\frac12\int_{\R^3}W(\by)\,\dy.\label{eq:Jellium_energy}
\end{align}
The value of the limit does not depend on the fact that we have used a large cube $C$ to perform the thermodynamic limit. We would get the same answer by using a large ball or any scaled sufficiently nice set. For the last term in~\eqref{eq:Jellium_energy} we have used the fact that $\sum_{\bx\in\cL}\1_Q(\bx-\by)=1$ for all $\by$ since the unit cells tile the whole space without any intersection. The sum as well as the integral in~\eqref{eq:Jellium_energy} are absolutely convergent, due to the fast decay of $W$. When $\cL$ is the BCC lattice, $e_{\rm Jel}(\cL)\simeq-1.4442$.

\subsubsection*{Indirect Coulomb energy}

Now we turn to the second problem, the indirect Coulomb energy of a probability distribution. Our work here concerns a lower bound to this quantity. While $E_{\rm Ind}$ can be $+\ii$ when the particles are on top of each other, it would be useful to have an upper bound on the minimal possible indirect energy per particle. One way to create an example is to take the (symmetrized) delta function distribution of $N$ particles arranged on the same lattice $\cL$ as before, then shift this whole lattice by an amount $\by$ in $\R^3$ and integrate $\by$ over the unit cell $Q$ of the lattice. For a cubic lattice, we would integrate $\by$ over a cube of side length 1, but for the BCC lattice, we integrate over the Wigner-Seitz cell $Q$, which is a truncated octahedron~\cite{AshcroftMermin}. In this way we construct a probability density $\mathbb{P}(\bx_1,...,\bx_N)$ that is a delta function over the BCC lattice times a uniform measure over the translations of the unit cell of this lattice. This gives a one-body density that is exactly equal to 1 over the region $\Omega$ consisting of the union of $N$ copies of the cells, as discussed before for the Jellium problem. For this $\mathbb{P}$ the expected value of the interparticle Coulomb energy is exactly the Coulomb energy of the lattice $\cL$, and this is the first term in \eqref{eq:Jellium_energy_N}. The self-energy of this one-body density is the third term in \eqref{eq:Jellium_energy_N}. Thus the indirect energy of this state is
\begin{equation}
\cE_{\rm Ind}(\cL\cap C,\Omega)=\frac12\sum_{\bx\neq\by\in \cL\cap C}\frac{1}{|\bx-\by|}-\frac12\int_{\Omega}\int_{\Omega}\frac{\dx\,\dy}{|\bx-\by|}.
\label{eq:indirect_background}
\end{equation}

In the literature, the second term in \eqref{eq:Jellium_energy_N} above is sometimes claimed to be the same quantity as the third term in \eqref{eq:Jellium_energy_N}, up to a factor of 2. In this manner, one is led to think that one has constructed a probability density with an indirect energy that is exactly the same as the Jellium energy, namely $-1.4442$ in the limit $N\to\ii$, if $\cL$ chosen to be the BCC lattice. Unfortunately, \emph{this expectation is not fulfilled}, for the reason we explain now. 

The main difference between the two problems is that the background can be chosen freely in the Jellium problem, whereas it is constrained in the indirect problem by the demand that it must be the one-particle density of the state of the particles. This opens the possibility that surface effects will come into play in the indirect problem.

First, think of the fixed background $\Omega$ and the lattice $\cL$ superposed on it. Then move this lattice a small distance $\by$ while keeping the background fixed. Finally, integrate $\by$ over the unit cell $Q$ of the lattice. The average over $\by$ of the Jellium energy of this shifted lattice is exactly the indirect enery~\eqref{eq:indirect_background}:
\begin{equation}
\cE_{\rm Ind}(\cL\cap C,\Omega)=\int_Q \cE_{\rm Jel}(\cL\cap C+\by,\Omega)\,\dy.
\label{eq:link_indirect}
\end{equation}
This is so because the first and third terms in the Jellium energy do not change, by translation invariance. What does change is the interaction between the particles and the background, and its average coincides with the direct term:
$$\int_Q \dy\left(\sum_{\br\in\cL\cap C}\int_{\Omega}\frac{\dx}{|\br+\by-\bx|}\right)=\int_\Omega\int_\Omega\frac{\dx\dy}{|\bx-\by|}.$$

We see from equation~\eqref{eq:link_indirect} that the indirect energy of our state is an average, over shifted configurations, of the Jellium energy. For $\by=0$, the Jellium energy is minimal. But it is not true that the energy stays the same when $\by\neq0$, because some of the lattice points move closer to the boundary of the domain, thereby raising the Coulomb energy. In the limit $N\to\ii$, this results in a shift which we shall compute explicitly. While it is tempting to suppose that this is a significant effect only for the particles near the boundary, and thus is a negligible effect in the thermodynamic limit, the reality is that this is a volume effect which does not disappear in the limit $N\to\ii$. To appreciate this from a physical perspective, we can think of moving the background instead of moving the lattice. The difference between the moved background and the centered one is a monopole layer on the surface which produces an electric potential, felt by all the particles. By calculating for $\by$ and $-\by$ together, the monopole layer turns into a dipole layer with charge $y^2$ times the area. The effect is a shift of the potential felt by each particle by a constant proportional to $y^2$.

In order to compute the shift, we do not use formula~\eqref{eq:link_indirect}, but rather look at the difference between the indirect and Jellium energies. Using the same notation as in~\eqref{eq:Jellium_energy_N_lattice}, we find for the indirect energy in~\eqref{eq:indirect_background}
\begin{multline}
\cE_{\rm Ind}(\cL\cap C,\Omega)
=\cE_{\rm Jel}(\cL\cap C,\Omega)+\sum_{\bx,\by\in\cL\cap C}W\ast\1_Q(\bx-\by).
\end{multline}
The last term is the one giving rise to the shift and, in this formulation, it clearly is a bulk contribution, since $\bx$ and $\by$ are varied over the whole large cube. Dividing by $N$ and taking the limit we obtain for the lattice $\cL$
\begin{equation}
\lim_{N\to\ii}\frac{\cE_{\rm Ind}(\cL\cap C,\Omega)}{N}=
e_{\rm Jel}(\cL)+\int_{\R^3}W(\by)\,\dy.
\end{equation}
Again the limit does not depend on $C$ being a cube. The shift $\int_{\R^3}W(\by)\,\dy$ can easily be expressed by expanding the Fourier transform of the function $W$ at $\bk=0$ and using the fact that there are no dipoles and quadrupoles, by assumption~\eqref{eq:nodipole_quadrupole}. One gets
\begin{align}
\int_{\R^3}W(\by)\,\dy&=4\pi \lim_{\bk\to0}|\bk|^{-2} \left(1-\int_Qe^{-i\bk\cdot\bx}\,\dx\right)\label{eq:limit_shift}\\
&=\frac{2\pi}{3}\int_Q x^2\,\dx.
\end{align}
Here it is very important that the Fourier transform of the Coulomb potential behaves as $|\bk|^{-2}$ at $\bk=0$. In 2D, the Fourier transform of $1/|\bx|$ is proportional to $1/|\bk|$ and the corresponding limit  in~\eqref{eq:limit_shift} is always $0$. The shift is, therefore, a purely 3D effect.
For the usual Bravais lattices, it is computed to be
\begin{equation}
\frac{2\pi}{3}\int_Q x^2\,\dx=
\begin{cases}
\pi/6=0.5236...&\text{(cubic),}\\
0.4948...&\text{(FCC),}\\
0.4935...&\text{(BCC).}
\end{cases}
\end{equation}
By rearrangement inequalities~\cite{LieLos-01}, the shift is always larger than the same integral on a ball $B$ of unit volume (which, of course, is not a Bravais lattice unit cell):
$$\frac{2\pi}{3}\int_Q x^2\,\dx\geq \frac{3}{10}\left(\frac{4\pi}3\right)^{1/3}\simeq0.4836.$$

The non-vanishing shift for the indirect energy reflects a peculiar feature of the Coulomb potential and what can go wrong when interchanging limits. If one starts with the Yukawa potential $e^{-\nu|\bx|}/|\bx|$, the shift vanishes in the thermodynamic limit for all $\nu>0$:
\begin{multline}
\int_{\R^3}\left(\frac{e^{-\nu|\bx|}}{|\bx|}-\int_{Q}\frac{e^{-\nu|\bx-\by|}}{|\bx-\by|}\,\dy\right)\,\dx\\
=\int_{\R^3}\frac{e^{-\nu|\bx|}}{|\bx|}\,\dx-\int_Q\left(\int_{\R^3}\frac{e^{-\nu|\bx-\by|}}{|\bx-\by|}\,\dx\right)\,\dy=0.
\label{eq:no-shift_Yukawa}
\end{multline}
This can equivalently be seen from~\eqref{eq:limit_shift}, if we replace $|\bk|^{-2}$ by $(\nu^2+|\bk|^2)^{-1}$. If we now take the limit $\nu\to0$ the Yukawa potential converges to the the Coulomb potential and we would incorrectly conclude that there is no shift. The correct limit is obtained by taking $\nu=0$ first and then the thermodynamic limit. By the same argument as in~\eqref{eq:no-shift_Yukawa}, the shift also vanishes when the integral is restricted to a large box and the Coulomb potential is replaced by a periodic function.

The shift discussed here is the same as the one found by D. Borwein, J.M. Borwein, R. Shail and A. Straub in~\cite{BorBorSha-89,BorBorStr-14,BorGlaMPh-13}, although expressed in a different form. The fact that the shift is positive and cannot vanish answers a question raised in~\cite{BorBorStr-14}. 

Another way to understand the appearance of the shift for the indirect energy is as follows. Let us denote by $U$, $I$ and $D$ the Coulomb, the interaction and the direct terms in the Jellium energy~\eqref{eq:Jellium_energy_N}. Then the Jellium energy is $U/2-I+D/2$ whereas the indirect energy is $U/2-D/2$. It is proved in~\cite{BorBorSha-89,BorBorStr-14} that 
$$\frac{U-I}{2N}\to e_{\rm Jel}(\cL)+\frac12\int_{\R^3}W(\bx)\,\dx$$
where $e_{\rm Jel}(\cL)$ is given by the well-known formulas of~\cite[Appendix]{ColMar-60}, that is, $e_{\rm Jel}(\cL)\simeq-1.4442$ when $\cL$ is the BCC lattice. On the other hand, we have seen above that
$$\frac{I-D}{2N}\to \frac12\int_{\R^3}W(\bx)\,\dx,$$
in the limit $N\to\ii$. We observe that the shift disappears for the Jellium energy, but remains for the indirect energy. In the literature, $(U-I)/2N$ is sometimes incorrectly claimed to converge to the Jellium energy $e_{\rm Jel}(\cL)$. On the other hand, if the limit of $(U-I)/2N$ is not carefully computed, the shift is neglected and the right final answer $e_{\rm Jel}(\cL)$ is obtained for the Jellium energy. This wrong mathematical argument is for instance used in~\cite[Appendix A]{ColMar-60}.

Taking the shift into account, the final indirect energy we get with this particular example is
\begin{equation}
E_{\rm Ind}\simeq
\begin{cases}
-0.8950&\text{(cubic),}\\
-0.9494&\text{(FCC),}\\
-0.9507&\text{(BCC).}
\end{cases}
\end{equation}
Although the shift depends on the lattice, the BCC lattice gives the lowest energy.

The conclusion we reach from this calculation is that the indirect Coulomb energy $E_{\rm Ind}$ for a Jellium like density is much less than the Jellium energy, namely $-0.9507$ as against $-1.4442$. This is not a good example for the Lieb-Oxford bound~\eqref{eq:usual_LO} because we already know from a two-particle variational calculation in~\cite[Sec.~3]{LieOxf-80} that the constant is less than $-1.23$, which is way below the Jellium indirect energy. 

The trial state in~\cite[Sec.~3]{LieOxf-80} has a non-vanishing gradient and, therefore, the constant in front of the $\rho^{4/3}$ term in our new inequality~\eqref{eq:LO_gradient} cannot be said to be lower than $-1.23$. However, we can use the previous argument and the fact that the gradient term $\int_{\R^3}|\nabla\rho(\bx)|\,\dx$ is proportional to the area of the boundary of the big domain $\Omega$, which is of lower order than $N$ and disappears in the limit $N\to\ii$. Consequently we conclude that the energy is not less than $-0.9507$. Dealing with the more singular gradient term $\int_{\R^3}|\nabla\rho^{\frac13}(\bx)|^2\,\dx$ would require us to regularize $\rho$, which is beyond the scope of this paper. 

If we summarize the situation, it is rigorously known that the Lieb-Oxford best constant lies somewhere between $-1.23$ and $-1.64$. For constant densities it is between $-0.95$ and $-1.45$ and it is not given by the Jellium indirect energy $-0.95$. Indeed, the constant density case has been numerically studied in~\cite{SeiGorSav-07,RasSeiGor-11}. In~\cite{RasSeiGor-11} a trial state with $N=30$ and an uniform density in a ball was found to have an indirect energy equal to $-1.31$. This says that the constant for the uniform density problem is between $-1.31$ and $-1.45$, but we do not have a conjecture about its value.

After this work was made available on arXiv, our bounds~\eqref{eq:LO_gradient2} were numerically studied in~\cite{FeiKenBur-14,ConFabTerDel-14}. In particular, in~\cite{FeiKenBur-14} it was found that the new bound is not better than the usual Lieb-Oxford bound~\eqref{eq:usual_LO} for spherically symmetric atoms with $Z\leq 88$. Dividing our constant $0.3270$ by a factor two would make it better for all $Z\geq 15$. It is therefore an important challenge to improve all these bounds.

\bigskip

\noindent\textbf{Acknowledgment.} We thank Rupert L. Frank for an interesting discussion, Kieron Burke and John P.~Perdew for useful comments, and the Institut Henri Poincaré in Paris for its hospitality. Grants from the U.S.~NSF PHY-0965859 and PHY-1265118 (E.L.), from the French ANR-10-BLAN-0101  (M.L.) and from the ERC MNIQS-258023 (M.L.) are gratefully acknowledged.

%%%%%%%%%%%%%%%%%%%%%%%%%%%%%%%%%%%%%%%%%%
%%%%%%%%%%%%%%%%%%%%%%%%%%%%%%%%%%%%%%%%%%
% \bibliographystyle{apsrev4-1}
% \bibliography{biblio}

\begin{thebibliography}{36}%
\makeatletter
\providecommand \@ifxundefined [1]{%
 \@ifx{#1\undefined}
}%
\providecommand \@ifnum [1]{%
 \ifnum #1\expandafter \@firstoftwo
 \else \expandafter \@secondoftwo
 \fi
}%
\providecommand \@ifx [1]{%
 \ifx #1\expandafter \@firstoftwo
 \else \expandafter \@secondoftwo
 \fi
}%
\providecommand \natexlab [1]{#1}%
\providecommand \enquote  [1]{``#1''}%
\providecommand \bibnamefont  [1]{#1}%
\providecommand \bibfnamefont [1]{#1}%
\providecommand \citenamefont [1]{#1}%
\providecommand \href@noop [0]{\@secondoftwo}%
\providecommand \href [0]{\begingroup \@sanitize@url \@href}%
\providecommand \@href[1]{\@@startlink{#1}\@@href}%
\providecommand \@@href[1]{\endgroup#1\@@endlink}%
\providecommand \@sanitize@url [0]{\catcode `\\12\catcode `\$12\catcode
  `\&12\catcode `\#12\catcode `\^12\catcode `\_12\catcode `\%12\relax}%
\providecommand \@@startlink[1]{}%
\providecommand \@@endlink[0]{}%
\providecommand \url  [0]{\begingroup\@sanitize@url \@url }%
\providecommand \@url [1]{\endgroup\@href {#1}{\urlprefix }}%
\providecommand \urlprefix  [0]{URL }%
\providecommand \Eprint [0]{\href }%
\providecommand \doibase [0]{http://dx.doi.org/}%
\providecommand \selectlanguage [0]{\@gobble}%
\providecommand \bibinfo  [0]{\@secondoftwo}%
\providecommand \bibfield  [0]{\@secondoftwo}%
\providecommand \translation [1]{[#1]}%
\providecommand \BibitemOpen [0]{}%
\providecommand \bibitemStop [0]{}%
\providecommand \bibitemNoStop [0]{.\EOS\space}%
\providecommand \EOS [0]{\spacefactor3000\relax}%
\providecommand \BibitemShut  [1]{\csname bibitem#1\endcsname}%
\let\auto@bib@innerbib\@empty
%</preamble>
\bibitem [{\citenamefont {Benguria}\ \emph {et~al.}(2012)\citenamefont
  {Benguria}, \citenamefont {Bley},\ and\ \citenamefont {Loss}}]{BenBleLos-12}%
  \BibitemOpen
  \bibfield  {author} {\bibinfo {author} {\bibfnamefont {R.~D.}\ \bibnamefont
  {Benguria}}, \bibinfo {author} {\bibfnamefont {G.~A.}\ \bibnamefont {Bley}},
  \ and\ \bibinfo {author} {\bibfnamefont {M.}~\bibnamefont {Loss}},\ }\href
  {\doibase 10.1002/qua.23148} {\bibfield  {journal} {\bibinfo  {journal} {Int.
  J. Quantum Chem.}\ }\textbf {\bibinfo {volume} {112}},\ \bibinfo {pages}
  {1579} (\bibinfo {year} {2012})}\BibitemShut {NoStop}%
\bibitem [{\citenamefont {Bach}\ and\ \citenamefont
  {Delle~Site}(2014)}]{BacDel-14}%
  \BibitemOpen
  \bibfield  {author} {\bibinfo {author} {\bibfnamefont {V.}~\bibnamefont
  {Bach}}\ and\ \bibinfo {author} {\bibfnamefont {L.}~\bibnamefont
  {Delle~Site}},\ }in\ \href {\doibase 10.1007/978-3-319-06379-9_23} {\emph
  {\bibinfo {booktitle} {Many-Electron Approaches in Physics, Chemistry and
  Mathematics}}},\ \bibinfo {series and number} {Mathematical Physics
  Studies},\ \bibinfo {editor} {edited by\ \bibinfo {editor} {\bibfnamefont
  {V.}~\bibnamefont {Bach}}\ and\ \bibinfo {editor} {\bibfnamefont
  {L.}~\bibnamefont {Delle~Site}}}\ (\bibinfo  {publisher} {Springer
  International Publishing},\ \bibinfo {year} {2014})\ pp.\ \bibinfo {pages}
  {413--417},\ \Eprint {http://arxiv.org/abs/1406.4475} {1406.4475}
  \BibitemShut {NoStop}%
\bibitem [{Note1()}]{Note1}%
  \BibitemOpen
  \bibinfo {note} {In Kohn-Sham theory, the \protect \emph
  {exchange-correlation energy} $E_{\protect \rm xc}$ by definition also
  contains a kinetic part. Here $E_{\protect \rm Ind}$ only contains the
  electrostatic part and it is sometimes denoted as $E_{\protect \rm
  xc}(\lambda =1)$ in the literature~\cite {LanPer-77}.}\BibitemShut {Stop}%
\bibitem [{\citenamefont {Lieb}\ and\ \citenamefont
  {Oxford}(1980)}]{LieOxf-80}%
  \BibitemOpen
  \bibfield  {author} {\bibinfo {author} {\bibfnamefont {E.~H.}\ \bibnamefont
  {Lieb}}\ and\ \bibinfo {author} {\bibfnamefont {S.}~\bibnamefont {Oxford}},\
  }\href {\doibase 10.1002/qua.560190306} {\bibfield  {journal} {\bibinfo
  {journal} {Int. J. Quantum Chem.}\ }\textbf {\bibinfo {volume} {19}},\
  \bibinfo {pages} {427} (\bibinfo {year} {1980})}\BibitemShut {NoStop}%
\bibitem [{\citenamefont {Lieb}(1979)}]{Lieb-79}%
  \BibitemOpen
  \bibfield  {author} {\bibinfo {author} {\bibfnamefont {E.~H.}\ \bibnamefont
  {Lieb}},\ }\href {\doibase 10.1016/0375-9601(79)90358-X} {\bibfield
  {journal} {\bibinfo  {journal} {Phys. Lett. A}\ }\textbf {\bibinfo {volume}
  {70}},\ \bibinfo {pages} {444} (\bibinfo {year} {1979})}\BibitemShut
  {NoStop}%
\bibitem [{\citenamefont {{Kin-Lic Chan}}\ and\ \citenamefont
  {Handy}(1999)}]{ChaHan-99}%
  \BibitemOpen
  \bibfield  {author} {\bibinfo {author} {\bibfnamefont {G.}~\bibnamefont
  {{Kin-Lic Chan}}}\ and\ \bibinfo {author} {\bibfnamefont {N.~C.}\
  \bibnamefont {Handy}},\ }\href {\doibase 10.1103/PhysRevA.59.3075} {\bibfield
   {journal} {\bibinfo  {journal} {Phys. Rev. A}\ }\textbf {\bibinfo {volume}
  {59}},\ \bibinfo {pages} {3075} (\bibinfo {year} {1999})}\BibitemShut
  {NoStop}%
\bibitem [{\citenamefont {Lieb}\ and\ \citenamefont
  {Seiringer}(2010)}]{LieSei-09}%
  \BibitemOpen
  \bibfield  {author} {\bibinfo {author} {\bibfnamefont {E.~H.}\ \bibnamefont
  {Lieb}}\ and\ \bibinfo {author} {\bibfnamefont {R.}~\bibnamefont
  {Seiringer}},\ }\href@noop {} {\emph {\bibinfo {title} {The {S}tability of
  {M}atter in {Q}uantum {M}echanics}}}\ (\bibinfo  {publisher} {Cambridge Univ.
  Press},\ \bibinfo {year} {2010})\BibitemShut {NoStop}%
\bibitem [{Note2()}]{Note2}%
  \BibitemOpen
  \bibinfo {note} {Here and elsewhere in the paper $\nabla \rho ^{\protect
  \frac 13}=\nabla (\rho ^{\protect \frac 13})$ and $|\nabla f(\protect \mathbf
  {x})|=\protect \sqrt {\left |\protect \genfrac {}{}{}1{\partial f}{\partial
  x}(\protect \mathbf {x})\right |^2+\left |\protect \genfrac {}{}{}1{\partial
  f}{\partial y}(\protect \mathbf {x})\right |^2+\left |\protect \genfrac
  {}{}{}1{\partial f}{\partial z}(\protect \mathbf {x})\right
  |^2}$.}\BibitemShut {Stop}%
\bibitem [{\citenamefont {Lieb}\ and\ \citenamefont
  {Narnhofer}(1975)}]{LieNar-75}%
  \BibitemOpen
  \bibfield  {author} {\bibinfo {author} {\bibfnamefont {E.~H.}\ \bibnamefont
  {Lieb}}\ and\ \bibinfo {author} {\bibfnamefont {H.}~\bibnamefont
  {Narnhofer}},\ }\href {\doibase 10.1007/BF01012066} {\bibfield  {journal}
  {\bibinfo  {journal} {J. Stat. Phys.}\ }\textbf {\bibinfo {volume} {12}},\
  \bibinfo {pages} {465} (\bibinfo {year} {1975})}\BibitemShut {NoStop}%
\bibitem [{\citenamefont {Wigner}(1934)}]{Wigner-34}%
  \BibitemOpen
  \bibfield  {author} {\bibinfo {author} {\bibfnamefont {E.~P.}\ \bibnamefont
  {Wigner}},\ }\href {\doibase 10.1103/PhysRev.46.1002} {\bibfield  {journal}
  {\bibinfo  {journal} {Phys. Rev.}\ }\textbf {\bibinfo {volume} {46}},\
  \bibinfo {pages} {1002} (\bibinfo {year} {1934})}\BibitemShut {NoStop}%
\bibitem [{\citenamefont {Perdew}(1991)}]{Perdew-91}%
  \BibitemOpen
  \bibfield  {author} {\bibinfo {author} {\bibfnamefont {J.~P.}\ \bibnamefont
  {Perdew}},\ }in\ \href@noop {} {\emph {\bibinfo {booktitle} {Electronic
  Structure of Solids '91}}},\ \bibinfo {editor} {edited by\ \bibinfo {editor}
  {\bibfnamefont {P.}~\bibnamefont {Ziesche}}\ and\ \bibinfo {editor}
  {\bibfnamefont {H.}~\bibnamefont {Eschrig}}}\ (\bibinfo  {publisher}
  {Akademie Verlag, Berlin},\ \bibinfo {year} {1991})\ pp.\ \bibinfo {pages}
  {11--20}\BibitemShut {NoStop}%
\bibitem [{\citenamefont {Perdew}\ and\ \citenamefont
  {Wang}(1992)}]{PerWan-92}%
  \BibitemOpen
  \bibfield  {author} {\bibinfo {author} {\bibfnamefont {J.~P.}\ \bibnamefont
  {Perdew}}\ and\ \bibinfo {author} {\bibfnamefont {Y.}~\bibnamefont {Wang}},\
  }\href {\doibase 10.1103/PhysRevB.45.13244} {\bibfield  {journal} {\bibinfo
  {journal} {Phys. Rev. B}\ }\textbf {\bibinfo {volume} {45}},\ \bibinfo
  {pages} {13244} (\bibinfo {year} {1992})}\BibitemShut {NoStop}%
\bibitem [{\citenamefont {Levy}\ and\ \citenamefont
  {Perdew}(1993)}]{LevPer-93}%
  \BibitemOpen
  \bibfield  {author} {\bibinfo {author} {\bibfnamefont {M.}~\bibnamefont
  {Levy}}\ and\ \bibinfo {author} {\bibfnamefont {J.~P.}\ \bibnamefont
  {Perdew}},\ }\href {\doibase 10.1103/PhysRevB.48.11638} {\bibfield  {journal}
  {\bibinfo  {journal} {Phys. Rev. B}\ }\textbf {\bibinfo {volume} {48}},\
  \bibinfo {pages} {11638} (\bibinfo {year} {1993})}\BibitemShut {NoStop}%
\bibitem [{\citenamefont {Odashima}\ and\ \citenamefont
  {Capelle}(2007)}]{OdaCap-07}%
  \BibitemOpen
  \bibfield  {author} {\bibinfo {author} {\bibfnamefont {M.~M.}\ \bibnamefont
  {Odashima}}\ and\ \bibinfo {author} {\bibfnamefont {K.}~\bibnamefont
  {Capelle}},\ }\href {\doibase http://dx.doi.org/10.1063/1.2759202} {\bibfield
   {journal} {\bibinfo  {journal} {The Journal of Chemical Physics}\ }\textbf
  {\bibinfo {volume} {127}},\ \bibinfo {eid} {054106} (\bibinfo {year}
  {2007})}\BibitemShut {NoStop}%
\bibitem [{\citenamefont {R\"as\"anen}\ \emph {et~al.}(2009)\citenamefont
  {R\"as\"anen}, \citenamefont {Pittalis}, \citenamefont {Capelle},\ and\
  \citenamefont {Proetto}}]{RasPitCapPro-09}%
  \BibitemOpen
  \bibfield  {author} {\bibinfo {author} {\bibfnamefont {E.}~\bibnamefont
  {R\"as\"anen}}, \bibinfo {author} {\bibfnamefont {S.}~\bibnamefont
  {Pittalis}}, \bibinfo {author} {\bibfnamefont {K.}~\bibnamefont {Capelle}}, \
  and\ \bibinfo {author} {\bibfnamefont {C.~R.}\ \bibnamefont {Proetto}},\
  }\href {\doibase 10.1103/PhysRevLett.102.206406} {\bibfield  {journal}
  {\bibinfo  {journal} {Phys. Rev. Lett.}\ }\textbf {\bibinfo {volume} {102}},\
  \bibinfo {pages} {206406} (\bibinfo {year} {2009})}\BibitemShut {NoStop}%
\bibitem [{\citenamefont {Giuliani}\ and\ \citenamefont
  {Vignale}(2005)}]{GiuVig-05}%
  \BibitemOpen
  \bibfield  {author} {\bibinfo {author} {\bibfnamefont {G.}~\bibnamefont
  {Giuliani}}\ and\ \bibinfo {author} {\bibfnamefont {G.}~\bibnamefont
  {Vignale}},\ }\href {http://www.google.dk/books?id=kFkIKRfgUpsC} {\emph
  {\bibinfo {title} {Quantum Theory of the Electron Liquid}}}\ (\bibinfo
  {publisher} {Cambridge University Press},\ \bibinfo {year}
  {2005})\BibitemShut {NoStop}%
\bibitem [{\citenamefont {{Hoffmann-Ostenhof}}\ and\ \citenamefont
  {{Hoffmann-Ostenhof}}(1977)}]{Hof-77}%
  \BibitemOpen
  \bibfield  {author} {\bibinfo {author} {\bibfnamefont {M.}~\bibnamefont
  {{Hoffmann-Ostenhof}}}\ and\ \bibinfo {author} {\bibfnamefont
  {T.}~\bibnamefont {{Hoffmann-Ostenhof}}},\ }\href@noop {} {\bibfield
  {journal} {\bibinfo  {journal} {Phys. Rev. A}\ }\textbf {\bibinfo {volume}
  {16}},\ \bibinfo {pages} {1782} (\bibinfo {year} {1977})}\BibitemShut
  {NoStop}%
\bibitem [{\citenamefont {Ma}\ and\ \citenamefont
  {Brueckner}(1968)}]{MaBru-68}%
  \BibitemOpen
  \bibfield  {author} {\bibinfo {author} {\bibfnamefont {S.-K.}\ \bibnamefont
  {Ma}}\ and\ \bibinfo {author} {\bibfnamefont {K.~A.}\ \bibnamefont
  {Brueckner}},\ }\href {\doibase 10.1103/PhysRev.165.18} {\bibfield  {journal}
  {\bibinfo  {journal} {Phys. Rev.}\ }\textbf {\bibinfo {volume} {165}},\
  \bibinfo {pages} {18} (\bibinfo {year} {1968})}\BibitemShut {NoStop}%
\bibitem [{\citenamefont {Perdew}\ \emph {et~al.}(1996)\citenamefont {Perdew},
  \citenamefont {Burke},\ and\ \citenamefont {Ernzerhof}}]{PerBurErn-96}%
  \BibitemOpen
  \bibfield  {author} {\bibinfo {author} {\bibfnamefont {J.~P.}\ \bibnamefont
  {Perdew}}, \bibinfo {author} {\bibfnamefont {K.}~\bibnamefont {Burke}}, \
  and\ \bibinfo {author} {\bibfnamefont {M.}~\bibnamefont {Ernzerhof}},\ }\href
  {\doibase 10.1103/PhysRevLett.77.3865} {\bibfield  {journal} {\bibinfo
  {journal} {Phys. Rev. Lett.}\ }\textbf {\bibinfo {volume} {77}},\ \bibinfo
  {pages} {3865} (\bibinfo {year} {1996})}\BibitemShut {NoStop}%
\bibitem [{Note3()}]{Note3}%
  \BibitemOpen
  \bibinfo {note} {The functional is written here for a system of unpolarized
  electrons, for simplicity.}\BibitemShut {Stop}%
\bibitem [{\citenamefont {Lieb}(1981)}]{Lieb-81b}%
  \BibitemOpen
  \bibfield  {author} {\bibinfo {author} {\bibfnamefont {E.~H.}\ \bibnamefont
  {Lieb}},\ }\href {\doibase 10.1103/RevModPhys.53.603} {\bibfield  {journal}
  {\bibinfo  {journal} {Rev. Mod. Phys.}\ }\textbf {\bibinfo {volume} {53}},\
  \bibinfo {pages} {603} (\bibinfo {year} {1981})}\BibitemShut {NoStop}%
\bibitem [{\citenamefont {Grafakos}\ and\ \citenamefont
  {Arber}(2008)}]{Grafakos-book}%
  \BibitemOpen
  \bibfield  {author} {\bibinfo {author} {\bibfnamefont {L.}~\bibnamefont
  {Grafakos}}\ and\ \bibinfo {author} {\bibfnamefont {W.}~\bibnamefont
  {Arber}},\ }\href@noop {} {\emph {\bibinfo {title} {Classical Fourier
  analysis}}}\ (\bibinfo  {publisher} {Springer-Verlag New York},\ \bibinfo
  {year} {2008})\BibitemShut {NoStop}%
\bibitem [{\citenamefont {Dunford}\ and\ \citenamefont
  {Schwartz}(1988)}]{DunfordSchwartz-1}%
  \BibitemOpen
  \bibfield  {author} {\bibinfo {author} {\bibfnamefont {N.}~\bibnamefont
  {Dunford}}\ and\ \bibinfo {author} {\bibfnamefont {J.~T.}\ \bibnamefont
  {Schwartz}},\ }\href@noop {} {\emph {\bibinfo {title} {Linear operators.
  {P}art {I}}}},\ Wiley Classics Library\ (\bibinfo  {publisher} {John Wiley \&
  Sons Inc.},\ \bibinfo {address} {New York},\ \bibinfo {year} {1988})\ pp.\
  \bibinfo {pages} {xiv+858},\ \bibinfo {note} {general theory}\BibitemShut
  {NoStop}%
\bibitem [{\citenamefont {Onsager}(1939)}]{Onsager-39}%
  \BibitemOpen
  \bibfield  {author} {\bibinfo {author} {\bibfnamefont {L.}~\bibnamefont
  {Onsager}},\ }\href {\doibase 10.1021/j150389a001} {\bibfield  {journal}
  {\bibinfo  {journal} {J. Phys. Chem.}\ }\textbf {\bibinfo {volume} {43}},\
  \bibinfo {pages} {189} (\bibinfo {year} {1939})}\BibitemShut {NoStop}%
\bibitem [{\citenamefont {Coldwell-Horsfall}\ and\ \citenamefont
  {Maradudin}(1960)}]{ColMar-60}%
  \BibitemOpen
  \bibfield  {author} {\bibinfo {author} {\bibfnamefont {R.~A.}\ \bibnamefont
  {Coldwell-Horsfall}}\ and\ \bibinfo {author} {\bibfnamefont {A.~A.}\
  \bibnamefont {Maradudin}},\ }\href {\doibase
  http://dx.doi.org/10.1063/1.1703670} {\bibfield  {journal} {\bibinfo
  {journal} {J. Math. Phys.}\ }\textbf {\bibinfo {volume} {1}},\ \bibinfo
  {pages} {395} (\bibinfo {year} {1960})}\BibitemShut {NoStop}%
\bibitem [{\citenamefont {Gori-Giorgi}\ and\ \citenamefont
  {Seidl}(2010)}]{GorSei-10}%
  \BibitemOpen
  \bibfield  {author} {\bibinfo {author} {\bibfnamefont {P.}~\bibnamefont
  {Gori-Giorgi}}\ and\ \bibinfo {author} {\bibfnamefont {M.}~\bibnamefont
  {Seidl}},\ }\href {\doibase 10.1039/C0CP01061H} {\bibfield  {journal}
  {\bibinfo  {journal} {Phys. Chem. Chem. Phys.}\ }\textbf {\bibinfo {volume}
  {12}},\ \bibinfo {pages} {14405} (\bibinfo {year} {2010})}\BibitemShut
  {NoStop}%
\bibitem [{\citenamefont {R\"as\"anen}\ \emph {et~al.}(2011)\citenamefont
  {R\"as\"anen}, \citenamefont {Seidl},\ and\ \citenamefont
  {Gori-Giorgi}}]{RasSeiGor-11}%
  \BibitemOpen
  \bibfield  {author} {\bibinfo {author} {\bibfnamefont {E.}~\bibnamefont
  {R\"as\"anen}}, \bibinfo {author} {\bibfnamefont {M.}~\bibnamefont {Seidl}},
  \ and\ \bibinfo {author} {\bibfnamefont {P.}~\bibnamefont {Gori-Giorgi}},\
  }\href {\doibase 10.1103/PhysRevB.83.195111} {\bibfield  {journal} {\bibinfo
  {journal} {Phys. Rev. B}\ }\textbf {\bibinfo {volume} {83}},\ \bibinfo
  {pages} {195111} (\bibinfo {year} {2011})}\BibitemShut {NoStop}%
\bibitem [{\citenamefont {Ashcroft}\ and\ \citenamefont
  {Mermin}(1976)}]{AshcroftMermin}%
  \BibitemOpen
  \bibfield  {author} {\bibinfo {author} {\bibfnamefont {N.}~\bibnamefont
  {Ashcroft}}\ and\ \bibinfo {author} {\bibfnamefont {N.}~\bibnamefont
  {Mermin}},\ }\href@noop {} {\emph {\bibinfo {title} {Solid State Physics}}}\
  (\bibinfo  {publisher} {Saunders, Philadelphia},\ \bibinfo {year}
  {1976})\BibitemShut {NoStop}%
\bibitem [{\citenamefont {Lieb}\ and\ \citenamefont {Loss}(2001)}]{LieLos-01}%
  \BibitemOpen
  \bibfield  {author} {\bibinfo {author} {\bibfnamefont {E.~H.}\ \bibnamefont
  {Lieb}}\ and\ \bibinfo {author} {\bibfnamefont {M.}~\bibnamefont {Loss}},\
  }\href@noop {} {\emph {\bibinfo {title} {Analysis}}},\ \bibinfo {edition}
  {2nd}\ ed.,\ \bibinfo {series} {Graduate Studies in Mathematics},
  Vol.~\bibinfo {volume} {14}\ (\bibinfo  {publisher} {American Mathematical
  Society},\ \bibinfo {address} {Providence, RI},\ \bibinfo {year} {2001})\
  pp.\ \bibinfo {pages} {xxii+346}\BibitemShut {NoStop}%
\bibitem [{\citenamefont {Borwein}\ \emph {et~al.}(1989)\citenamefont
  {Borwein}, \citenamefont {Borwein},\ and\ \citenamefont
  {Shail}}]{BorBorSha-89}%
  \BibitemOpen
  \bibfield  {author} {\bibinfo {author} {\bibfnamefont {D.}~\bibnamefont
  {Borwein}}, \bibinfo {author} {\bibfnamefont {J.~M.}\ \bibnamefont
  {Borwein}}, \ and\ \bibinfo {author} {\bibfnamefont {R.}~\bibnamefont
  {Shail}},\ }\href {\doibase 10.1016/0022-247X(89)90032-2} {\bibfield
  {journal} {\bibinfo  {journal} {J. Math. Anal. Appl.}\ }\textbf {\bibinfo
  {volume} {143}},\ \bibinfo {pages} {126} (\bibinfo {year}
  {1989})}\BibitemShut {NoStop}%
\bibitem [{\citenamefont {Borwein}\ \emph {et~al.}(2014)\citenamefont
  {Borwein}, \citenamefont {Borwein},\ and\ \citenamefont
  {Straub}}]{BorBorStr-14}%
  \BibitemOpen
  \bibfield  {author} {\bibinfo {author} {\bibfnamefont {D.}~\bibnamefont
  {Borwein}}, \bibinfo {author} {\bibfnamefont {J.~M.}\ \bibnamefont
  {Borwein}}, \ and\ \bibinfo {author} {\bibfnamefont {A.}~\bibnamefont
  {Straub}},\ }\href {\doibase 10.1016/j.jmaa.2014.01.008} {\bibfield
  {journal} {\bibinfo  {journal} {J. Math. Anal. Appl.}\ }\textbf {\bibinfo
  {volume} {414}},\ \bibinfo {pages} {489} (\bibinfo {year}
  {2014})}\BibitemShut {NoStop}%
\bibitem [{\citenamefont {Borwein}\ \emph {et~al.}(2013)\citenamefont
  {Borwein}, \citenamefont {Glasser}, \citenamefont {McPhedran}, \citenamefont
  {Wan},\ and\ \citenamefont {Zucker}}]{BorGlaMPh-13}%
  \BibitemOpen
  \bibfield  {author} {\bibinfo {author} {\bibfnamefont {J.~M.}\ \bibnamefont
  {Borwein}}, \bibinfo {author} {\bibfnamefont {M.~L.}\ \bibnamefont
  {Glasser}}, \bibinfo {author} {\bibfnamefont {R.~C.}\ \bibnamefont
  {McPhedran}}, \bibinfo {author} {\bibfnamefont {J.~G.}\ \bibnamefont {Wan}},
  \ and\ \bibinfo {author} {\bibfnamefont {I.~J.}\ \bibnamefont {Zucker}},\
  }\href {\doibase 10.1017/CBO9781139626804} {\emph {\bibinfo {title} {Lattice
  sums then and now}}},\ \bibinfo {series} {Encyclopedia of Mathematics and its
  Applications}, Vol.\ \bibinfo {volume} {150}\ (\bibinfo  {publisher}
  {Cambridge University Press, Cambridge},\ \bibinfo {year} {2013})\ pp.\
  \bibinfo {pages} {xx+368},\ \bibinfo {note} {with a foreword by Helaman
  Ferguson and Claire Ferguson}\BibitemShut {NoStop}%
\bibitem [{\citenamefont {Seidl}\ \emph {et~al.}(2007)\citenamefont {Seidl},
  \citenamefont {Gori-Giorgi},\ and\ \citenamefont {Savin}}]{SeiGorSav-07}%
  \BibitemOpen
  \bibfield  {author} {\bibinfo {author} {\bibfnamefont {M.}~\bibnamefont
  {Seidl}}, \bibinfo {author} {\bibfnamefont {P.}~\bibnamefont {Gori-Giorgi}},
  \ and\ \bibinfo {author} {\bibfnamefont {A.}~\bibnamefont {Savin}},\ }\href
  {\doibase 10.1103/PhysRevA.75.042511} {\bibfield  {journal} {\bibinfo
  {journal} {Phys. Rev. A}\ }\textbf {\bibinfo {volume} {75}},\ \bibinfo
  {pages} {042511} (\bibinfo {year} {2007})}\BibitemShut {NoStop}%
\bibitem [{\citenamefont {Feinblum}\ \emph {et~al.}(2014)\citenamefont
  {Feinblum}, \citenamefont {Kenison},\ and\ \citenamefont
  {Burke}}]{FeiKenBur-14}%
  \BibitemOpen
  \bibfield  {author} {\bibinfo {author} {\bibfnamefont {D.~V.}\ \bibnamefont
  {Feinblum}}, \bibinfo {author} {\bibfnamefont {J.}~\bibnamefont {Kenison}}, \
  and\ \bibinfo {author} {\bibfnamefont {K.}~\bibnamefont {Burke}},\ }\href
  {\doibase http://dx.doi.org/10.1063/1.4904448} {\bibfield  {journal}
  {\bibinfo  {journal} {J. Chem. Phys.}\ }\textbf {\bibinfo {volume} {141}},\
  \bibinfo {eid} {241105} (\bibinfo {year} {2014})}\BibitemShut {NoStop}%
\bibitem [{\citenamefont {{Constantin}}\ \emph {et~al.}(2014)\citenamefont
  {{Constantin}}, \citenamefont {{Fabiano}}, \citenamefont {{Terentjevs}},\
  and\ \citenamefont {{Della Sala}}}]{ConFabTerDel-14}%
  \BibitemOpen
  \bibfield  {author} {\bibinfo {author} {\bibfnamefont {L.~A.}\ \bibnamefont
  {{Constantin}}}, \bibinfo {author} {\bibfnamefont {E.}~\bibnamefont
  {{Fabiano}}}, \bibinfo {author} {\bibfnamefont {A.}~\bibnamefont
  {{Terentjevs}}}, \ and\ \bibinfo {author} {\bibfnamefont {F.}~\bibnamefont
  {{Della Sala}}},\ }\href@noop {} {\bibfield  {journal} {\bibinfo  {journal}
  {ArXiv e-prints}\ } (\bibinfo {year} {2014})},\ \Eprint
  {http://arxiv.org/abs/1411.1579} {arXiv:1411.1579} \BibitemShut {NoStop}%
\bibitem [{\citenamefont {Langreth}\ and\ \citenamefont
  {Perdew}(1977)}]{LanPer-77}%
  \BibitemOpen
  \bibfield  {author} {\bibinfo {author} {\bibfnamefont {D.~C.}\ \bibnamefont
  {Langreth}}\ and\ \bibinfo {author} {\bibfnamefont {J.~P.}\ \bibnamefont
  {Perdew}},\ }\href {\doibase 10.1103/PhysRevB.15.2884} {\bibfield  {journal}
  {\bibinfo  {journal} {Phys. Rev. B}\ }\textbf {\bibinfo {volume} {15}},\
  \bibinfo {pages} {2884} (\bibinfo {year} {1977})}\BibitemShut {NoStop}%
\end{thebibliography}

%merlin.mbs apsrev4-1.bst 2010-07-25 4.21a (PWD, AO, DPC) hacked
%Control: key (0)
%Control: author (72) initials jnrlst
%Control: editor formatted (1) identically to author
%Control: production of article title (-1) disabled
%Control: page (0) single
%Control: year (1) truncated
%Control: production of eprint (0) enabled
%

\end{document}